\newtheorem{thm}{Theorem}[section]
\theoremstyle{definition}
\theoremstyle{remark}
\theoremstyle{plain}
\newtheorem{lem}[thm]{Lemma}
\newtheorem{col}[thm]{Corollary}
\def\denseformat{
\setlength{\textheight}{9in}
\setlength{\textwidth}{6.9in}
\setlength{\evensidemargin}{-0.2in}
\setlength{\oddsidemargin}{-0.2in}
\setlength{\headsep}{10pt}
\setlength{\topmargin}{-0.3in}
\setlength{\columnsep}{0.375in}
\setlength{\itemsep}{0pt}
}
\begin{document}

%\begin{frontmatter}

%% Title, authors and addresses

\title{Deterministic Distributed $(\Delta + o(\Delta))$-Edge-Coloring, \\ and Vertex-Coloring of Graphs with Bounded Diversity}

\author{Leonid Barenboim\thanks{
\ Open University of Israel.
 E-mail: {\tt leonidb@openu.ac.il}
 \newline  $^{**}$  \ \  Ben-Gurion University of the Negev. Email: {\tt elkinm@cs.bgu.ac.il}
\newline $^{***}$ \ Open University of Israel. Email: {\tt tzali.tm@gmail.com} 
\newline This research has been supported by the Israeli Academy of Science, grant 724/15.
}  \and Michael Elkin$^{**}$ \and Tzalik Maimon$^{***}$}

%\address{TODO}
\begin{titlepage}
\def\thepage{}
\maketitle
\begin{abstract}
%% Text of abstract
In the distributed message-passing setting a communication network is represented by a graph whose vertices represent processors that perform local computations and communicate over the edges of the graph. In the distributed {\em edge-coloring} problem the processors are required to assign colors to edges, such that all edges incident on the same vertex are assigned distinct colors. The previously-known deterministic algorithms for edge-coloring employed at least $(2\Delta - 1)$ colors, even though any graph admits an edge-coloring with $\Delta + 1$ colors \cite{V64}. Moreover, the previously-known deterministic algorithms that employed at most $O(\Delta)$ colors required superlogarithmic time \cite{B15,BE10,BE11,FHK15}. In the current paper we devise deterministic edge-coloring algorithms that employ only $\Delta + o(\Delta)$ colors, for a very wide family of graphs. Specifically, as long as the arboricity $a$ of the graph is $a = O(\Delta^{1 - \epsilon})$, for a constant $\epsilon > 0$, our algorithm computes such a coloring within {\em polylogarithmic} deterministic time.

We also devise significantly improved deterministic edge-coloring algorithms for {\em general graphs} for a very wide range of parameters. Specifically, for any value $\chi$ in the range $[4\Delta, 2^{o(\log \Delta)} \cdot \Delta]$, our $\chi$-edge-coloring algorithm has smaller running time than the best previously-known $\chi$-edge-coloring algorithms. Our algorithms are actually much more general, since edge-coloring is equivalent to {\em vertex-coloring of line graphs.}
Our method is applicable to vertex-coloring of the family of graphs with {\em bounded diversity} that contains line graphs, line graphs of hypergraphs, and many other graphs. We significantly improve upon previous vertex-coloring of such graphs, and as an implication also obtain the improved edge-coloring algorithms for general graphs. 

Our results are obtained using a novel technique that connects vertices or edges in a certain way that reduces clique size. The resulting structures, which we call {\em connectors}, can be colored more efficiently than the original graph. Moreover, the color classes constitute simpler subgraphs that can be colored even more efficiently using appropriate connectors. Hence, we recurse until we obtain sufficiently simple structures that are colored directly. We introduce several types of connectors that are useful for various scenarios. We believe that this technique is of independent interest.
\end{abstract}
\end{titlepage}
\pagenumbering {arabic}
%\begin{keyword}

%\end{keyword}

%\end{frontmatter}

%% main text
\section{Introduction}
\label{Sc:intro}

\noindent {\bf 1.1 Distributed Synchronous Message Passing Model \\}
%\subsection{Distributed Synchronous Message Passing Model}
%\label{Sc:theModel}
In the synchronous message passing model a communication network is represented by an $n$-vertex graph $G = (V,E)$ of maximum degree $\Delta = \Delta(G)$. Each vertex in the graph has its own processing power and local memory. The vertices communicate via message passing over the edges of the graph. Computation proceeds in discrete rounds of local computations and exchange of messages. A single round is required for a message to traverse an edge.
The running time is the number of synchronized rounds until all vertices terminate. Generally, local computation is unrestricted. Vertices have distinct IDs of size $O(\log n)$.

%\subsection{Coloring Problems}
%\label{Sc:theProblems}
\noindent {\bf 1.2 Coloring Problems \\ }
The {\em vertex-coloring} and {\em edge-coloring} problems are among the most fundamental problems in distributed computing.
 These problems are significant in communication networks since appropriate labeling of stations, antennas and clients using a small number of colors is important for various network tasks. For example, channel allocation, scheduling, and work-load balancing can benefit considerably from a good coloring. In particular, edge-coloring is useful in open shop scheduling and scheduling production processes \cite{WHHHLSS97}, path coloring in fiber-optic communication \cite{EJ01}, and link-scheduling in sensor networks \cite{GDP05}.  The less colors one uses, the less cost one pays for performing a task. On the other hand, restricting the number of colors makes a problem more difficult.
 
 In 1987 Linial devised a deterministic algorithm for vertex-coloring and edge-coloring using quite a large number of colors, but with extremely good running time. Specifically, Linial obtained $O(\Delta^2)$-coloring that requires just $O(\log^* n)$ rounds. Roughly in the same time Cole and Vishkin \cite{CV86} and Goldberg et al. \cite{GPS87} devised deterministic algorithms for vertex-coloring of oriented paths and oriented trees with just $3$ colors. Since then these problems have been very intensively studied, resulting in a continuous progress both in general graphs and specific graph families. The most recent results make it possible to color vertices and edges of general graphs using $\Delta^{1 + \epsilon}$-colors in deterministic polylogarithmic time \cite{BE10,BE11}. However, when a smaller number of colors is used, the current results of distributed algorithms are far from being satisfactory. In particular, no deterministic polylogarithmic algorithms for $(\Delta + 1)$-vertex-coloring or $(2\Delta -1)$-edge-coloring are known. The situation with edge coloring is even worse, in the sense that although any graph can be edge-colored using at most $(\Delta + 1)$ colors \cite{V64}, no efficient deterministic $(\Delta + o(\Delta))$-edge-coloring algorithms is known in the distributed setting. (However, there are very efficient randomized algorithms for $\Delta(1 + \epsilon)$-edge-coloring, for any constant $\epsilon > 0$, that are applicable whenever $\Delta$ is sufficiently large \cite{DGP98,GP97,EPS15}.) 

In this paper we address the problem of edge-coloring with significantly less than $2\Delta - 1$ colors for a very wide family of graphs. Specifically, we obtain a deterministic algorithm that employs $\Delta + o(\Delta)$ colors for graphs with {\em arboricity}\footnote{The {\em arboricity} of a graph is the minimum number of forests into which its edge set can be decomposed. The arboricity is always smaller than or equal to the maximum degree of the graph.} $a = o(\Delta)$. Bounded arboricity graphs include planar graphs, graphs of bounded treewidth, graphs that exclude any fixed minor, and many other graphs. Unless the arboricity is very close to the maximum degree, the running time of our algorithm is {\em polylogarithmic} in $n$. %Hence we extend the knowledge regarding a central question of this setting: "what can be computed in deterministic polylogarithmic time?". Specifically, we compute an edge-coloring with ?-colors in ? time.

Our results are actually more general than that. For general graphs (with unbounded arboricity), for any positive integer $x \in o(\log \Delta)$ , we compute a $(2^{x+1}\Delta)$-edge-coloring within $\tilde{O}(x \cdot \Delta^{\frac{1}{2x + 2}} + \log^* n)$ time.\footnote{The $\tilde{O}$-notation supresses polylogarithmic factors. We note however, that whenever we write $\tilde{O}(f(\Delta) + \log^*n)$, the actual running time is $\tilde{O}(f(\Delta)) + O(\log^*n)$. We use the former notation for brevity.} This improves almost quadratically upon the state-of-the-art for the entire range of $x \in o(\log \Delta)$ (cf. Table 1). 
This table compares our new algorithms with the best previously-known ones. The latter are obtained by plugging in the state of the art $(\Delta + 1)$-vertex-coloring algorithm \cite{FHK15} in the algorithm of \cite{BE11}.
Our results address an open question raised in a PODC 2011 paper \cite{BE11} by the first-named and second-named authors of the current paper. % Even considering that in the current paper we use more recently developed auxiliary algorithms as building blocks, using the same auxiliary algorithm as in \cite{BE11} would still result in significant improvements in the whole range, thanks to the new powerful techniques that we derive in the current paper.
\begin{table}[h]

\begin{center}
  \begin{tabular}{|l|l|l|l||l|}
  \hline

  \multicolumn{2}{|c|}{{\bf Our Results}} & & \multicolumn{2}{|c|}{Previous Results (\cite{BE11} + \cite{FHK15})} \\
  \hline
  \hline
	Number of colors & Running time & & Number of colors & Running time \\
  \hline

  $4\Delta$ & $\tilde{O}(\Delta^{1/4}) + O(\log^* n)$ & & $(4 + \epsilon)\Delta$ & $O(\Delta^{1/3} + \log^* n)$ \\
  \hline
  $8\Delta$ & $\tilde{O}(\Delta^{1/6}) + O(\log^* n)$ & & $(8 + \epsilon)\Delta$ & $O(\Delta^{1/4} + \log^* n)$   \\
  \hline
  $16\Delta$ & $\tilde{O}(\Delta^{1/8}) + O(\log^* n)$ & & $(16 + \epsilon)\Delta$ & $O(\Delta^{1/5} + \log^* n)$   \\
  \hline
%  $(2^{x+1}\Delta)  $ & $\tilde{O}(x \cdot \Delta^{\frac{1}{2x + 2}}) + O(\log^* n)$ & & $(2^{x+1} + \epsilon)\Delta  $ & $O(x \cdot \Delta^{\frac{2}{2 + x}} + \log^* n)$ \\
% \hline
 $(2^{x+1}\Delta)  $  & $\tilde{O}(x \cdot \Delta^{\frac{1}{2x + 2}}) + O(\log^* n)$  & &  $(2^{x+1} + \epsilon)\Delta  $ &
  $\tilde{O}(x \cdot \Delta^{\frac{1}{x + 2}} + \log^* n)$ \\
%  & & & & (By plugging \cite{FHK15} into \cite{BE11}) \\
  \hline
  \end{tabular}
\end{center}
\caption{Edge coloring of general graphs.}
\end{table}

In addition to these results, we obtain even more general ones, that apply to {\em vertex-coloring} of a wide family of graphs. In particular, this family includes line-graphs. (Recall that an edge coloring of a graph is a vertex coloring of its line graph.) However, line graphs are just an example of a more general family of graphs, namely {\em graphs with bounded diversity}. We define {\em diversity} as the maximal number of maximal cliques a vertex belongs to.\footnote{A clique $Q$ is {\em maximal} in $G$ if there is no other clique in $G$ that strictly contains $Q$'s vertices. The number of cliques is counted with respect to a consistent clique identification whenever the family of graphs in hand allows this. Otherwise, each vertex identifies all maximal cliques it belongs to, which results in a consistent identification. In both cases, the cliques that a vertex belongs to contain all its neighbors. For example, in line graphs, each clique corresponds to a vertex of the original graph, and the clique vertices correspond to the edges incident on that vertex of the original graph.}  The diversity of a graph, which is denoted as $D = D(G)$, is the maximal diversity among all the vertices of the graph. Clearly, line graphs have diversity $2$. Indeed, each vertex of a line graph corresponds to an edge in the original graph. Each endpoint of this edge corresponds to a clique in the line graphs. Since an edge has two endpoints, the diversity of any line graph is at most $2$. More generally, for any positive constant $c$, the diversity of the line graph of a $c$-uniform hypergraph is $c$.  Given a graph of diversity $D$ and maximal clique size $S$, we obtain a $(D^{x+1}S)$-coloring within $\tilde{O}(x \cdot \sqrt{DS^{1/{(x+1)}}} + \log^* n)$ time, for any positive integer $x \in o(\log S)$. The best previous results that apply to this family of graphs are the results for graphs with {\em bounded neighborhood independence} \cite{BE11}. (The latter family generalizes graphs with bounded diversity, thus the results apply to these graphs as well.) Our new results compare favorably on this family of graphs. (Cf. Table 2.) Although not as general, the family of graphs with bounded diversity is still an important graph family that captures many characteristics of graphs with bounded neighborhood independence, but has some specific helpful properties that we use to obtain our improved results.

\begin{table}[h]

\begin{center}
  \begin{tabular}{|l|l|l|l||l|}
  \hline

  \multicolumn{2}{|c|}{{\bf Our Results}} & & \multicolumn{2}{|c|}{Previous Results (\cite{BE11} + \cite{FHK15})} \\
  \hline
  \hline
	Number of colors & Running time & & Number of colors & Running time \\
  \hline

  $D^2S$ & $\tilde{O}(\sqrt{D}S^{1/4}) + O(\log^* n)$ & & $(D^2 + \epsilon)\Delta$ & $O(D^2 \cdot \Delta^{1/3} + \log^* n)$ \\
  \hline
  $D^3S$ & $\tilde{O}(\sqrt{D}S^{1/6}) + O(\log^* n)$ & & $(D^3 + \epsilon)\Delta$ & $O(D^3 \cdot \Delta^{1/4} + \log^* n)$   \\
  \hline
  $D^4S$ & $\tilde{O}(\sqrt{D}S^{1/8}) + O(\log^* n)$ & & $(D^4 + \epsilon)\Delta$ & $O(D^4 \cdot \Delta^{1/5} + \log^* n)$   \\
  \hline
%  $D^{x+1}S $ & $\tilde{O}(x \cdot \sqrt{D S^{1/(x + 1)}}) + O(\log^* n)$ & & $(D^{x+1} + \epsilon)\Delta  $ & $O(xD^x \cdot \Delta^{\frac{2}{2 + x}} + \log^* n)$ \\
%  \hline
$D^{x+1}S $  & $\tilde{O}(x \cdot \sqrt{D} S^{1/(2x + 2)}) + O(\log^* n)$ & &
 $(D^{x+1} + \epsilon)\Delta  $ & $\tilde{O}(xD^x \cdot \Delta^{\frac{1}{x + 2}} + \log^* n)$ \\
 %& & & & (by \cite{FHK15} and \cite{BE11}) \\
 \hline
  \end{tabular}
\end{center}
\caption{Vertex-coloring of graphs with bounded diversity $D$ and maximal clique size $S \leq \Delta + 1$.}
\end{table}

%\subsection{Clique Decomposition as a New Approach}
%\label{Sc:theModel}
\noindent {\bf 1.3 Clique Decomposition as a New Approach \\}
In this paper we show a new approach for solving coloring problems.
In this approach we think of a clique as the "coloring worst enemy".\footnote{Note that in general there are graphs that contain no large clique, but still have large chromatic number. In particular, triangle-free graphs may have arbitrarily large chromatic number \cite{M55}. However, in the family of graphs with bounded diversity $D$ and maximal clique size $S$, the chromatic number is at most $D(S - 1) + 1$, and thus, graphs with smaller cliques are better for coloring.} Our goal is then breaking down these cliques by removing edges within them, thus breaking the cliques to smaller components of themselves. We refer to the new structure we obtain as a {\em connector}.
We then color the connector, and make sure that when we return the edges that were removed, the coloring remains proper. We then take this one step further by using the same concept recursively to obtain a better running time at the expense of more colors used, but still not too much colors.
This technique not only allows us to obtain better results than any previously known ones, but it is also easier to implement. We note that in general, identifying maximal cliques is an NP-hard problem. (Though, in the current distributed setting, a vertex can identify the maximal cliques it belongs to within a single round). However, in various graph families, such as line graphs (that are provided with their original graphs), a consistent identification of cliques is not hard, even sequentially.

Connectors turn out to be extremely useful in various scenarios. They allow us to compute {\em clique-decompositions}, in which the graph is partitioned into a bounded number of subgraphs in which the maximal cliques are significantly smaller than in the original graph. Together with the bound on diversity, this fact allows us to color these subgraph efficiently in parallel, and obtain a proper {\em vertex-coloring} for the entire input graph. Another type of connectors is very useful for {\em edge-coloring}. In this case it is used for computing a different kind of decompositions, in which subgraphs have {\em stars} of bounded size. These star-decompositions turn out to be very useful for edge-coloring. Yet another type of connectors allows us to compute a decomposition in which the subgraphs have both bounded arboricity and bounded degree. This allows us to bound the number of colors further, and obtain an ultimate edge-coloring with $\Delta + o(\Delta)$ colors of graphs with arboricity bounded away from $\Delta$.

%\subsection{Related Work}
%\label{sc:relatedwork}
\noindent {\bf 1.4 Related Work\\}
Panconesi and Rizzi \cite{PR01} devised a deterministic edge-coloring algorithm for general graphs with $(2\Delta - 1)$ colors that requires $O(\Delta + \log^* n)$ time. This result was recently improved in \cite{B15,FHK15} that obtained $(2\Delta - 1)$-edge-coloring within $\tilde{O}(\sqrt \Delta + \log^* n)$ time. Efficient deteministic edge-coloring algorithms that employ $O(\Delta)$ colors and $\Delta^{1 + \epsilon}$ colors were devised in \cite{BE11}. Czygrinow et al. devised a deterministic $O(\Delta \log n)$-edge-coloring algorithm with $O(\log^4 n)$ time \cite{CHK01}.
In parallel to our work, Ghaffari and Su obtained a $(2 + \epsilon)\Delta$-edge-coloring in deterministic time $O(\log^{11} n/\epsilon^3)$ \cite{GS16}.
Randomized edge-coloring 
%of various families of graphs
was studied in \cite{DGP98, EPS15, GP97}. For graphs with arboricity $a$, the algorithm of \cite{BE08} computes $(2\Delta - 1)$-edge-coloring within $O(a + \log n)$ time. For the family of graphs with $a = O(\Delta^{1 - \epsilon})$, for a constant $\epsilon > 0$, a $(\Delta + 1)$-vertex-coloring algorithm with deterministic polylogarithmic time was devised in \cite{BE10}. The latter result, however, does not imply edge-coloring algorithms in polylogarithmic time, since the arboricity of line graphs is $\Theta(\Delta)$. %Also, while $(\Delta + 1)$-vertex-coloring can be used for $(2\Delta - 1)$-edge-coloring, it is not clear if it can be used for edge coloring with $2\Delta - 2$ or less colors.
Moreover, if the algorithm of \cite{BE10} is used for edge-coloring via vertex-coloring of line graphs, the number of colors is at least $2\Delta - 1$.

Edge coloring  was also extensively studied in the PRAM model \cite{KS87,LSH96,FR96,HLS01,ZN98}.
There are known PRAM NC deterministic algorithms for $(\Delta+o(\Delta))$-edge-coloring \cite{BR91,MNN94},
obtained by derandomizing Karloff-Shmoys' randomized algorithm \cite{KS87}. 
Zhou and Nishizeki \cite{ZN98} also devised a deterministic PRAM NC $O(\mbox{max}\{{\Delta,a^2}\})$-edge-coloring of graphs with constant arboricity $a$.
However, these algorithms \cite{KS87,BR91,MNN94,ZN98} heavily rely  on assumptions of the PRAM model, i.e., that processors can efficiently fetch information concerning any edge or vertex in the graph. Hence these algorithms appear to be inapplicable to the distributed setting.

%\section{Preliminaries}
%\label{Sc:prem}
%A clique $Q$ in $G$ is {\em maximal}, if there is no other clique $T$ in $G$, for which $V(Q) \subset V(T)$.
%The {\em Diversity} of a vertex $v$ in a given graph is the number of distinct maximal cliques which $v$ belongs to.
%I think that in the current paper is is ok if the size is greater than 1.
%The diversity of a graph $G$ is the maximum diversity of its vertices. We denote the diversity of a graph by $D = D(G).$

%%  ================================================================== START

\section{Clique Decomposition}
\label{Sc:cd}

In this section we devise an algorithm that partitions an input graph into subgraphs in a way that reduces clique size. Specifically, each clique in each subgraph is smaller by a factor of $k > 1$ than a maximal clique in the original graph.
%%%@old-version: In this section we partition cliques in a manner in which we receive two subgraphs each of which  cliques are smaller by a factor of a given $k>1$.

For parameters $p,q > 0$, a {\em $(p,q)$-clique-decomposition} of a graph $G = (V,E)$ is a vertex partition $U_1,U_2,\dots,U_p$ of $V$, such that the maximum size of a clique in $G(U_i)$ is $q$, for all $i \in [p]$. 
A {\em clique master} is a unique vertex of a clique which has the highest ID in the clique. It is possible that a master of a clique belongs to another clique and may even be its master as well. Also, non-masters of certain cliques may be masters in other cliques they belong to.

%%%%%%%%%%%%%%%%%%5 figure 1
%\begin{figure}
%\includegraphics[width=\textwidth]{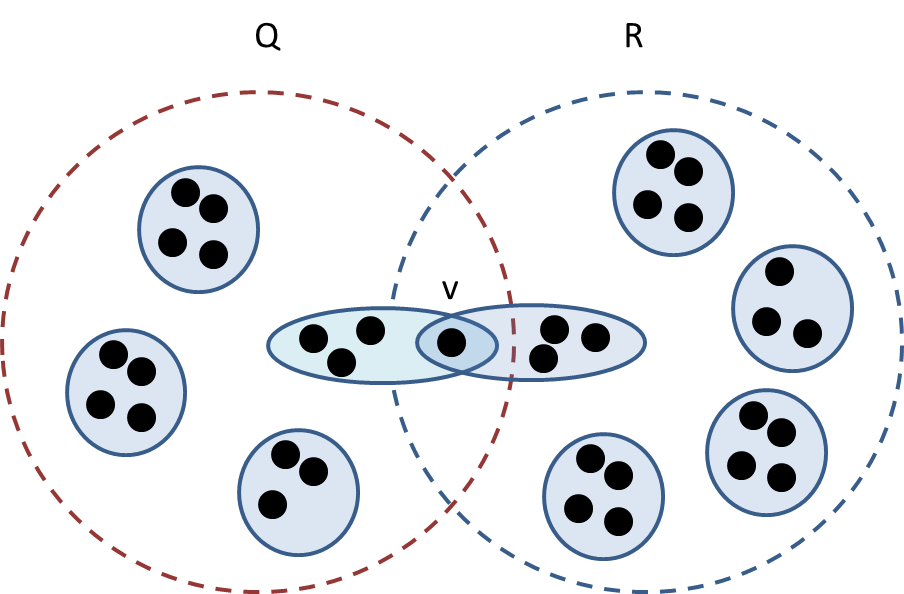}
%\caption{A connector with $t = 4$ of a pair of cliques $Q,R$ that share a vertex $v$.}
%\end{figure}

We construct {\em Clique Decomposition} using the following structure, which we call a {\em connector}. A connector is constructed as follows.
Given a graph $G$ and an integer parameter $t > 1$, we start by choosing a master in each maximal clique of $G$. Clique identification and master selection is performed in $O(1)$ rounds in the distributed setting, since each clique has diameter $1$. Each such a master is responsible for the computation in its clique. Denote by ${\cal Q}$ the set of all identified maximal cliques in $G$.
Let us denote the size of a clique $Q$ by $S(Q)$. Each clique $Q$ of ${\cal Q}$ partitions its vertex set into subsets $V_1,V_2,\dots,V_k$ of size $t$ each. (Except for the last set $V_k$ that can be smaller.) Thus, we get $k= \left \lceil S(Q)/t \right \rceil$ subsets for each clique $Q$. 
%%%@old-version:In each  $Q_i$, we mark all edges $e \in E$ such that both endpoints of $e$ are in the same subset $V_j$ for some $1 \leq j \leq k$. Let us denote $G'$ the resulting subgraph.
%%%Let $G' = (V,E')$ denote the subgraph of $G$ such that the edge set $E' \subseteq E$ consists only of edges of $G(V_i$), where $i \in[k]$ and $V_i$ is a subset of vertices in a partition of some clique $Q$. In other words, for all cliques $Q$ of $G$, and all subsets $V_1,V_2,...,V_k$ in the partition of $Q$, an edge $e$ belongs to $E'$ iff both its endpoints are in $V_i$ for some $i \in [k]$. We will refer to $G'$ as a {\em Connector}.
Let $E' = \{ (u,v) \in E \ | \ u,v \in V_i(Q), \mbox{ for some clique } Q \in {\cal Q} \mbox{ and some } i \in [k] \}$ denote the set of edges connecting vertices from the same part $V_i$ of a maximal clique $Q \in {\cal Q}$.
%(for all maximal cliques and all parts $v_i$).
Let $G' = (V,E')$ be the graph with the original vertex set, and edge set $E'$. The graph $G'$ will be referred to as {\em connector}. (See Figure 1 in appendix A.)
Denote $D = D(G)$.

\begin{lem} \label{lem:cdeg}
The maximum degree $\Delta' = \Delta(G')$ of $G'$ is at most $D \cdot (t-1)$.
\end{lem}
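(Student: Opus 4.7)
The plan is to bound the degree of an arbitrary vertex $v \in V$ in the connector $G'$ by direct counting, exploiting the diversity bound. First I would fix a vertex $v$ and characterize its neighbors in $G'$: by the definition of $E'$, a vertex $u$ is adjacent to $v$ in $G'$ if and only if there exists a maximal clique $Q \in {\cal Q}$ and an index $i \in [k]$ such that both $v$ and $u$ lie in the same part $V_i(Q)$.

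Next I would invoke the definition of diversity. Since $D = D(G)$ is the maximum number of maximal cliques any vertex belongs to, $v$ belongs to at most $D$ maximal cliques, call them $Q_1, Q_2, \ldots, Q_m$ with $m \leq D$. Within each such clique $Q_j$, the partition $V_1(Q_j), \ldots, V_k(Q_j)$ places $v$ into exactly one part, say $V_{i(j)}(Q_j)$, which has size at most $t$. Hence there are at most $t-1$ vertices other than $v$ in $V_{i(j)}(Q_j)$, and each neighbor of $v$ in $G'$ must belong to $V_{i(j)}(Q_j)$ for some $j \in [m]$. Taking the union over all $m \leq D$ cliques yields
\[
\deg_{G'}(v) \;\leq\; \sum_{j=1}^{m} (t-1) \;\leq\; D \cdot (t-1),
\]
which proves the claim.

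The argument is essentially a clean counting bound, so there is no major technical obstacle; the only subtlety worth flagging is that a single vertex $u$ could lie in the same part as $v$ in more than one maximal clique (so the above sum may overcount), but because we want an upper bound on $|N_{G'}(v)|$ the crude union bound suffices. I would also briefly remark that nothing in the argument depends on which master was chosen per clique or on how exactly each clique partitions its vertices — only on the fact that each part has size at most $t$ and that $v$ participates in at most $D$ such partitions.
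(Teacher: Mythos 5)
Your proof is correct and follows the same approach as the paper: bound the diversity ($v$ lies in at most $D$ maximal cliques), note that in each such clique $v$ belongs to exactly one part of size at most $t$, hence at most $t-1$ potential $G'$-neighbors per clique, and take a union bound. You actually state the counting a bit more carefully than the paper's terse version (which loosely says ``each such a clique has size at most $t$'' when it means the part of the partition containing $v$), and you correctly flag that the union bound may overcount but is harmless for an upper bound.
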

\begin{proof}
By definition of diversity, for a vertex $v \in V$, the vertex $v$ belongs to at most $D$ maximal cliques. Each such a clique has size at most $t$. Thus, $v$ has at most $D \cdot (t-1)$ neighbors.
\end{proof}

%Let us denote the degree of $G'$ by $\Delta'$.
Now we can color $G'$ using the algorithm  \cite{FHK15}. We obtain a coloring $\varphi$ within running time $O(\sqrt {\Delta'} \log^{2.5}\Delta' +\log^* n)$. The numbers of colors of $\varphi$ is $\Delta'+1 \leq D(G)(t-1)+1$. We denote the right-hand-side by $\gamma$.

We consider the original graph $G$ with the above coloring $\varphi$ of the vertices of $G$. Let us denote by $G_i, 1 \leq i \leq \gamma$, the subgraphs induced by the subset of all vertices with the same color $i, 1 \leq i \leq \gamma$. Even though $\varphi$ is a proper coloring of $G'$, it is not necessarily a proper coloring of $G$. However, the following holds.

\begin{lem} \label{lem:deg}
The maximum degree $\Delta(G_i)$ of $G_i$ for each $1 \leq i \leq \gamma$, is at most $(k-1)D(G)$.
\end{lem}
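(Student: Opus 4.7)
The plan is to bound, for an arbitrary vertex $v \in V(G_i)$, the number of its neighbors in $G$ that also receive color $i$ under $\varphi$, by counting those neighbors clique by clique.

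First I would recall two ingredients already set up in the excerpt. By the definition of diversity and the convention in the footnote, $v$ belongs to at most $D = D(G)$ maximal cliques of $G$, and these cliques together contain \emph{all} neighbors of $v$ in $G$. Second, by the very construction of the connector $G'$, for every maximal clique $Q \in \mathcal{Q}$ and every part $V_j(Q)$, the set $E'$ contains all edges of $V_j(Q)$, so each $V_j(Q)$ is a clique in $G'$. Since $\varphi$ is a proper coloring of $G'$, this means $\varphi$ assigns pairwise distinct colors to the vertices of each $V_j(Q)$; in particular, at most one vertex of $V_j(Q)$ is colored $i$.

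Next I would fix a maximal clique $Q$ containing $v$ and let $V_{j^\star}(Q)$ be the unique part of $Q$ to which $v$ belongs. The remaining vertices of $V_{j^\star}(Q)$ are neighbors of $v$ in $G'$, so none of them carries color $i$; thus $v$ has no color-$i$ neighbors inside its own part of $Q$. For each of the at most $k-1$ other parts $V_j(Q)$ with $j \ne j^\star$, the previous paragraph gives at most one color-$i$ vertex. Hence the number of color-$i$ neighbors of $v$ that lie in $Q$ is at most $k-1$.

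Summing this bound over the at most $D$ maximal cliques that contain $v$ — which, by the diversity property, exhaust the $G$-neighborhood of $v$ — gives at most $(k-1)D(G)$ neighbors of $v$ in $G_i$, as claimed. The only real point of care is the dual role of $\varphi$: one direction uses properness of $\varphi$ on the \emph{retained} edges of each part (to rule out color-$i$ vertices within a part), while the other uses the diversity hypothesis to ensure that no neighbor of $v$ is missed outside the accounted cliques; both are needed simultaneously, and this is the only subtlety I would emphasize in writing up the formal argument.
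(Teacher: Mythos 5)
Your proof is correct and follows essentially the same route as the paper's: each part $V_j(Q)$ is a clique in $G'$ and hence receives pairwise distinct $\varphi$-colors, so within each maximal clique $Q$ containing $v$ there are at most $k-1$ vertices sharing $v$'s color, and summing over the at most $D$ such cliques gives the bound. You are merely slightly more explicit than the paper about why $v$'s own part contributes nothing and about the fact that the $D$ cliques exhaust $v$'s neighborhood.
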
 
\begin{proof} For a vertex $v \in V$, the vertex $v$ belongs to at most $D$ maximal cliques. In each such clique $Q$, there could be at most $k-1$ neighbors with the same color as $v$. (Recall that the vertex set of $Q$ is partitioned into subsets $V_1,V_2,\dots,V_k$. Each such subset is a clique in $G'$, and therefore all the vertices of $V_j, 1 \leq j \leq k$, have distinct colors. Thus, $v$ has at most $k-1$ neighbors in $Q$ with the same color as that of $v$.) Hence, $v$ has at most $(k-1)D$ neighbors in $G$ with the same color as that of $v$.
\end{proof}

\begin{lem} \label{lem:divers}
For each vertex $v$ of $G_i$, $i \in [\gamma]$, it holds that:\\
(i) The size of all cliques of $G_i$ that $v$ belongs to is at most $k$.\\
(ii) The diversity of $v$ with respect to $G_i$ is at most $D = D(G)$, i.e., $D(G_i) \leq D(G)$.
\end{lem}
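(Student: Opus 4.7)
The plan is to handle (i) and (ii) by the same underlying observation: any clique of $G_i$ is a clique of $G$, so it must live inside some maximal clique $Q \in \mathcal{Q}$ of $G$, and then the structure of the partition $V_1,\dots,V_k$ of $Q$ plus the fact that $\varphi$ properly colors $G'$ constrain it strongly.

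For part (i), I would take an arbitrary clique $C$ of $G_i$ that contains $v$. Since $G_i$ is a subgraph of $G$, the set $C$ is a clique of $G$, and hence is contained in some maximal clique $Q \in \mathcal{Q}$. By construction, $Q$ is partitioned into parts $V_1,\dots,V_k$ (with $k = \lceil S(Q)/t \rceil$), and each $V_j$ is itself a clique of the connector $G'$. Since $\varphi$ is a proper coloring of $G'$, all vertices inside any single part $V_j$ receive distinct colors, so at most one vertex of $V_j$ has the color $i$ of the vertices in $C$. Summing over the $k$ parts yields $|C| \le k$.

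For part (ii), I would show that the natural map $C \mapsto Q(C)$, sending each maximal clique $C$ of $G_i$ containing $v$ to a maximal clique of $G$ containing $v$ that extends it (such a $Q(C)$ exists by the same containment argument as above), is injective. Suppose for contradiction that two distinct maximal cliques $C_1, C_2$ of $G_i$ containing $v$ are both contained in the same maximal clique $Q$ of $G$. Then every pair $(u,w)$ with $u \in C_1,\, w \in C_2$ lies inside $Q$, so $u$ and $w$ are adjacent in $G$; moreover both have color $i$, so the edge $(u,w)$ survives in $G_i$. Hence $C_1 \cup C_2$ is again a clique of $G_i$, strictly larger than $C_1$ (or $C_2$), contradicting maximality. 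Injectivity then gives that the number of maximal cliques of $G_i$ containing $v$ is at most the number of maximal cliques of $G$ containing $v$, which is at most $D(G)$ by definition of diversity.

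I do not foresee a real obstacle; the only point that needs care is making sure in (ii) that $C_1 \cup C_2$ really is a clique of $G_i$, not merely of $G$. This uses crucially that $C_1$ and $C_2$ are monochromatic in color $i$ and that $G_i$ is the \emph{full} subgraph of $G$ induced on color class $i$, so that any $G$-edge between two color-$i$ vertices is present in $G_i$. Once that is observed, both claims follow cleanly.
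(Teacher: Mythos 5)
Your proof is correct and follows essentially the same route as the paper's: for (i), contain the clique in a maximal clique $Q\in\mathcal{Q}$, note each part $V_j$ of $Q$ is a clique of $G'$ hence contributes at most one vertex to the color class $i$, and sum over the $k$ parts; for (ii), argue that two distinct maximal cliques of $G_i$ through $v$ cannot lie in a common maximal clique of $G$ because the induced-subgraph property would merge them, yielding the injection into maximal cliques of $G$ through $v$. The paper phrases (i) via counting neighbors of $v$ in $T$ rather than counting color-$i$ vertices in $Q$, and (ii) as a direct pigeonhole contradiction rather than an explicit injection, but these are only cosmetic differences.
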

\def\APPt{
\begin{proof}
We begin by proving part (i) of the lemma. Denote by $Q$ a clique that $v$ belongs to in $G'$. By the definition of $G'$, $Q$ is a part of some clique $T$ of $G$. In the clique $T$ the vertex $v$ can have at most $k-1$ neighbors with the same $\varphi$ color. Thus, in $Q$ there could be at most $k-1$ neighbors for $v$.
%See the proof of Lemma \ref{lem:deg}.
Now we prove part (ii) of the lemma. Let us assume for contradiction that there is a vertex $v$ in $G_i$ with a diversity greater than $D$. Since each clique in $G_i$ is a subgraph of a clique in $G$, there must be two distinct maximal cliques $Q_1, Q_2$ (of $G_i$) which $v$ belongs to that are contained in the same maximal clique $Q$ in $G$. (This is because the number of maximal cliques in $G$ that contain $v$ is smaller than that in $G_i$.) Since $G_i$ is a vertex-induced subgraph of $G$, if two vertices with the color $i$ are connected by an edge in $G$, then they are connected by an edge in $G_i$. Let $u$ and $w$ be vertices of $Q_1$ and $Q_2$, respectively. Both $u$ and $w$ belong to $Q$.  Hence the edge $(u,w)$ belongs to $E(G)$, and thus also belongs to $E(G_i)$. Hence the set of vertices $V(Q_1) \cup V(Q_2)$ is fully connected, contradicting the maximality of $Q_1$ and $Q_2$. Thus the diversity of each $v \in G_i$ is at most $D$. By definition, the diversity of $G_i$ is the maximal diversity of its vertices, that is at most $D$.
\end{proof}
}

The proof of Lemma \ref{lem:divers} is found in Appendix B.
Recall that $\gamma = (t -1) \cdot D + 1$. By Lemmas \ref{lem:cdeg} and \ref{lem:divers}, for any integer $t > 1$, we can compute $((t -1) \cdot D + 1, k))$-clique-decomposition $V_1, V_2,..., V_{\gamma}$ of $G$. Recall that $k=\left \lceil S/t \right \rceil \leq S/t+1$.
In other words, for any integer $t > 1$, we obtain a $(t \cdot D, S/t+1)$-clique-decomposition. Now the maximal cliques in the subgraphs $G_1,G_2,...,G_{\gamma}$ are smaller than maximal cliques in $G$. Also, the diversities $D_1,D_2,...,D_{\gamma}$ of $G_1,G_2,...,G_{\gamma}$, respectively, are all at most $D$. We recursively apply our method to obtain yet smaller cliques at the expense of increasing the number of subgraphs. Once we obtain subgraphs with sufficiently small clique size we can color them directly.

Now we are ready to provide our coloring algorithm using clique-decomposition. (See Algorithm 1.) The algorithm accepts as input a graph $G$, the diversity $D = D(G)$, the size $S$ of the maximal clique of $G$, the parameter $t$ of the connector to be constructed, and the number of recursion levels $x$. The algorithm starts by computing a connector in line 1. Then the connector is colored with $\gamma$ colors in line 3. These color classes induce $\gamma$ subgraphs. The algorithm is invoked recursively in parallel on all these subgraphs in line 7. The recursion terminates in lines 9 - 13, where the subgraphs are colored directly. The colorings of the recursion levels are combined in line 15. Once the algorithm terminates, we have a proper coloring of the entire input graph. This completes the description of the algorithm.  See pseudo-code below.
\begin{algorithm}[H]
\caption{CD-Coloring($G, D, S, t, x$)}
\begin{algorithmic}[1]
\STATE $G'$ := compute the connector of $G$ with the parameter $t$. 
\STATE $\Delta' := D(t-1)$ /* $\Delta'$ is the maximum degree of $G'$ */
\STATE $\varphi$ := color $G'$ with $\gamma = \Delta'+1$ colors. /* using \cite{FHK15} */ 
\STATE Denote by $G_i, 1\leq i \leq \gamma$, the subgraphs induced by color classes of $\varphi$.
\IF {$x>1$}
\FOR {each $G_i$ in parallel}
\STATE $\psi$ := CD-Coloring($G_i, D, \left \lceil S/t \right \rceil, t, x-1$) \ \ \ \ \ \ \ /* cliques' size decreases by a factor of $t$ */
\ENDFOR
\ELSE 

\STATE /* $x = 1$ */
\FOR {each $G_i$ in parallel}
\STATE $\psi$ := color $G_i$ with $D \cdot (\left \lceil S/t \right \rceil -1) + 1$ colors. /* using \cite{FHK15} */ \\ /* Note that $D \cdot (\left \lceil S/t \right \rceil -1) + 1 \geq \Delta(G_i)+1$. So this number of colors is sufficient.  */.
\ENDFOR
\ENDIF
\RETURN $\langle \varphi, \psi \rangle$
\end{algorithmic}
\end{algorithm}

Denote by $S$ the size of a maximal clique in the original input graph $G$, and by $S_i$ the size of a maximal clique in a subgraph $G_i$, for all $i \in [\gamma]$. We recursively apply our method on $G_i, i \in [\gamma]$, in parallel. (See line 7 of Algorithm 1.) That is, we compute $(t \cdot D, S_i/t + 1)$-clique-decomposition in each $G_i$. In other words, this is a $(t \cdot D, (S/t)/t + 1/t + 1)$-clique-decomposition of each $G_i$, $1 \leq i \leq t \cdot D$. (Recall that the diversity of $G_i$ is not greater than the diversity of $G$.)
%we obtain  $(t \cdot D(G), (S_G/t)/t + 1/t + 1)$-clique-decomposition in each $G_i$. 
Thus, the overall number of subsets is $(t \cdot D)^2$.
Hence, after two recursion levels we obtain
$((t \cdot D)^2,  S/(t^2)+ 1/t  + 1)$-clique-decomposition of the original graph $G$.
After three recursion levels we obtain
$((t \cdot D)^3,  S/(t^3)+ 1/t^2 + 1/t + 1)$-clique-decomposition.
For $x$ recursion levels we obtain \\
$((t \cdot D)^x,  S/(t^x)+ 1/t^{x-1} + 1/t^{x - 2} + \dots + 1/t + 1)$-clique-decomposition.
Since $t \geq 2$, we have a geometric sequence $1/t^{x-1} + 1/t^{x - 2} + \dots + 1/t + 1 \leq 2$, and thus we obtain
$((t \cdot D)^x, S/t^x+2)$-clique-decomposition of $G$.
Since the maximum degree of each connector is $(t - 1) \cdot D$, by Lemma \ref{lem:cdeg}, the coloring of the connectors in each recursion level requires $\tilde{O}(\sqrt{D \cdot t} + \log^* n)$ time. Since there are $x$ recursion levels, we obtain the following Theorem.
\begin{thm}\label{decomRuntime}
We compute $((t \cdot D)^x, S/(t^x)+2)$-clique-decomposition of $G$
in time $\tilde{O}(x \cdot (\sqrt{D \cdot t} + \log^* n))$.
\end{thm}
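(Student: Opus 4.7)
The plan is to prove the theorem by induction on the number of recursion levels $x$, leveraging the connector construction together with Lemmas \ref{lem:cdeg} and \ref{lem:divers} which already establish, respectively, the degree bound on $G'$ and the diversity/clique-size bounds on the color-class subgraphs $G_i$.

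For the base case $x = 1$ I would build the connector $G'$ and color it via \cite{FHK15}. By Lemma \ref{lem:cdeg} the maximum degree of $G'$ is at most $D(t-1)$, so the coloring uses $\gamma = \Delta'+1 \leq tD$ colors in $\tilde O(\sqrt{Dt}) + O(\log^*n)$ rounds. Lemma \ref{lem:divers}(i) then guarantees that each induced subgraph $G_i$ has maximum clique size at most $k = \lceil S/t \rceil \leq S/t + 1 \leq S/t + 2$, which is exactly a $(tD, S/t + 2)$-clique-decomposition.

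For the inductive step, I would apply the hypothesis to each $G_i$ in parallel. By Lemma \ref{lem:divers}(ii) we have $D(G_i) \leq D$, and the maximal clique size in $G_i$ is $S_i \leq \lceil S/t \rceil$. The induction gives each $G_i$ a $((tD)^{x-1}, S_i/t^{x-1} + 2)$-clique-decomposition; composing these partitions with the outer $\gamma$-way partition produces at most $tD \cdot (tD)^{x-1} = (tD)^x$ vertex subsets in total. The new clique-size bound unrolls as
\[
\frac{S}{t^x} + \sum_{j=0}^{x-1} \frac{1}{t^{j}} \ \leq\ \frac{S}{t^x} + 2,
\]
where the additive contribution $1/t^j$ accounts for one ceiling rounding at recursion depth $j$, and $t \geq 2$ forces the geometric sum to be at most $2$. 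This matches the desired $((tD)^x, S/t^x + 2)$-clique-decomposition.

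For the running time, each of the $x$ recursion levels performs a single coloring of a connector of degree at most $D(t-1)$, costing $\tilde O(\sqrt{Dt}) + O(\log^*n)$ rounds via \cite{FHK15}; since the levels proceed sequentially (the $(i{+}1)$-st level operates on the color classes produced by the $i$-th), the total is $\tilde O(x(\sqrt{Dt} + \log^*n))$. The main obstacle I anticipate is the bookkeeping of the ceiling losses: one must verify that they accumulate only as a geometric series whose sum is absorbed into the additive $+2$, rather than inflating the clique-size guarantee level by level. Once that accounting is in place, the theorem follows directly from the two lemmas combined with induction.
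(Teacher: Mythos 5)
Your overall approach---build a connector per level, color it via \cite{FHK15} using the degree bound from Lemma \ref{lem:cdeg}, use Lemma \ref{lem:divers} to control the diversity and clique size of the color-class subgraphs, and bound the accumulated ceiling losses by the geometric series $\sum_{j\geq 0} t^{-j} \leq 2$---is exactly what the paper does, except the paper simply unrolls the $x$ levels directly rather than packaging the argument as a formal induction. The running-time accounting and the count of $(tD)^x$ subsets are also the same.

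The one genuine issue is that the inductive hypothesis, as you state it, does not close. If you assume that $x-1$ levels applied to $G_i$ give a clique bound of $S_i/t^{x-1}+2$, then substituting $S_i \leq \lceil S/t \rceil \leq S/t+1$ yields $S/t^{x} + 1/t^{x-1} + 2$, which is strictly larger than $S/t^x + 2$; an induction with a fixed ``$+2$'' slack therefore loses a $1/t^{x-1}$ term at every level. To make the induction rigorous you must carry the exact invariant ``$x'$ levels give clique size at most $S/t^{x'} + \sum_{j=0}^{x'-1} t^{-j}$'' as the hypothesis, verify that substituting $S_i \leq S/t+1$ preserves it, and only pass to the ``$+2$'' bound once at the very end via $\sum_{j=0}^{x-1} t^{-j} \leq 2$ (for $t\geq 2$); this is precisely what the paper's unrolled computation does. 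You explicitly flag this bookkeeping as the ``main obstacle,'' so you see the issue---but as written the displayed sum does not follow from the stated hypothesis, and the hypothesis needs to be tightened for the step to go through.
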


In the end of the $x$ recursion levels, we obtain $(t \cdot D)^x$ subgraphs. We color these subgraphs as follows.
Let us denote the degree of a subgraph $G_i$, $1 \leq i \leq (t \cdot D(G))^x$, by $\Delta_i$. Now we use the algorithm of \cite{FHK15} again on each $G_i$ in parallel. (See line 12 of Algorithm 1.) We obtain a $(\Delta_i + 1)$-coloring within $\tilde{O}(\sqrt{\Delta_i} + \log^* n)  = \tilde{O}(\sqrt{D(G) S/(t^x)} + \log^* n)$ time. Let us denote the obtained coloring by $\psi$.
At the final stage, each vertex $v \in V$ colors itself with $\langle i , \psi (v) \rangle$, where $i$, $1 \leq i \leq (t \cdot D)^x$, is the index of the subgraph $G_i$ that $v$ belongs to (in the end of $x$ recursion levels). 

\begin{thm} The resulting coloring is proper.
\end{thm}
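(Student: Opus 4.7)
The plan is to trace a single edge $(u,v)\in E(G)$ through the $x$ recursive calls and show that in every case the two endpoints receive distinct final colors. The combined color assigned to a vertex $v$ is the tuple $\langle \varphi_1(v),\varphi_2(v),\ldots,\varphi_x(v),\psi(v)\rangle$, where $\varphi_\ell(v)$ is the color $v$ receives from the connector coloring at recursion level $\ell$, and $\psi(v)$ is the color assigned in line 12 of the bottom case. Equivalently, the first $x$ coordinates identify the unique subgraph $G_i$ (among the $(t\cdot D)^x$ subgraphs produced after $x$ levels) to which $v$ belongs.

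First, I would split into two cases according to whether $u$ and $v$ end up in the same leaf subgraph. If at some recursion level $\ell \in [x]$ the endpoints receive different connector colors, i.e.\ $\varphi_\ell(u)\neq\varphi_\ell(v)$, then the index tuples $(\varphi_1(u),\ldots,\varphi_x(u))$ and $(\varphi_1(v),\ldots,\varphi_x(v))$ differ, so the final combined colors differ in (at least) one of their first $x$ coordinates. In this case we are done irrespective of what $\psi$ does.

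Second, suppose that $\varphi_\ell(u)=\varphi_\ell(v)$ for every $\ell\in[x]$. I would then argue by a short induction on the recursion depth that the edge $(u,v)$ survives down to the leaf subgraph $G_i$ containing both endpoints. The key observation, already used implicitly in Lemma~\ref{lem:divers}, is that each $G_i$ is a \emph{vertex-induced} subgraph of its parent: by construction in line 4, $G_i$ consists of all vertices of the parent graph colored $i$ by the connector coloring $\varphi$, together with all edges between them. Hence whenever both endpoints of an edge lie in $G_i$, the edge itself lies in $E(G_i)$. Iterating, $(u,v)\in E(G_i)$ at the leaf level. In the leaf case, line 12 colors $G_i$ with at least $\Delta(G_i)+1$ colors using the algorithm of~\cite{FHK15}, producing a proper coloring of $G_i$. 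Therefore $\psi(u)\neq\psi(v)$, and the combined final colors differ in the last coordinate.

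In both cases the endpoints of $(u,v)$ receive distinct final colors, so the combined coloring is a proper coloring of $G$. I do not foresee any real obstacle: the only non-trivial point is the vertex-induced property of the subgraphs at each recursion level, and that is guaranteed directly by line 4 of Algorithm~1, so no delicate argument is needed beyond tracking the tuple structure of the final color.
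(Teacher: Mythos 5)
Your proposal is correct and follows essentially the same approach as the paper: split on whether the endpoints $u,v$ land in the same leaf subgraph; if not, the index component of the final tuple differs, and if so, use the vertex-induced property of the color-class subgraphs to argue the edge survives to the leaf, where $\psi$ is a proper coloring. You spell out the induction on recursion depth and the vertex-induced observation that the paper leaves implicit, but the underlying argument is identical.
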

\begin{proof} Let $u,v$ be two adjacent vertices in $G$. If they belong to different subsets $G_i,G_j, i \neq j$, then $\langle i, \psi(u) \rangle \neq \langle j, \psi(v) \rangle$, and we are done. %If they both belong to some $V_j$ of some clique Q, then $\varphi(u) \neq \varphi(v)$.
Otherwise, there exists an index $j$, such that $v,u \in G_j$. Therefore, in the last stage of the algorithm, they obtain different colors $\psi$, and thus $\psi(u) \neq \psi(v)$. Hence, $\langle j, \psi (v) \rangle \neq \langle j, \psi (u) \rangle$.
\end{proof}

\begin{thm}\label{colCount} The resulting coloring employs $O((t \cdot D)^x \cdot (S/(t^x)+2) \cdot D)$ colors.
\end{thm}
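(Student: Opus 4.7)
The plan is to read off the color count directly from the coloring scheme: each vertex is assigned the pair $\langle i, \psi(v)\rangle$ in the final stage, so the total number of colors used is at most (number of possible first coordinates) times (max number of possible second coordinates over all subgraphs).

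First I would account for the first coordinate. By Theorem \ref{decomRuntime}, after $x$ recursion levels of clique-decomposition we have produced $(t \cdot D)^x$ subgraphs $G_i$, each having maximum clique size at most $S/t^x + 2$. So the index $i$ ranges over at most $(t \cdot D)^x$ values.

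Next I would bound the second coordinate. By Lemma \ref{lem:divers}, applied inductively across the recursion, the diversity of each final subgraph $G_i$ is still at most $D$. Combined with the clique-size bound of $S/t^x + 2$, Lemma \ref{lem:cdeg} (or simply the same argument: each vertex lies in at most $D$ maximal cliques, each contributing at most $S/t^x + 1$ neighbors) yields $\Delta(G_i) \leq D \cdot (S/t^x + 1)$. The direct coloring in line 12 via \cite{FHK15} therefore uses at most $\Delta(G_i) + 1 \leq D \cdot (S/t^x + 1) + 1$ colors.

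Multiplying the two bounds, the overall palette has size at most
\[
(t \cdot D)^x \cdot \bigl(D \cdot (S/t^x + 1) + 1\bigr) = O\bigl((t \cdot D)^x \cdot (S/t^x + 2) \cdot D\bigr),
\]
which is the claimed bound. There is no real obstacle here; the only subtle point is to make sure the diversity bound $D(G_i) \leq D$ persists through all $x$ levels of recursion, which is guaranteed by applying Lemma \ref{lem:divers}(ii) at each level, so the maximum-degree estimate for the direct coloring step remains valid at the deepest recursion level.
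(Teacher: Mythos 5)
Your proof is correct and follows essentially the same route as the paper's: count the $(t\cdot D)^x$ final subgraphs, bound each one's maximum degree by $D\cdot(S/t^x+1)$ via the clique-size and diversity bounds, and multiply by the $\Delta(G_i)+1$ colors used by \cite{FHK15}. The only difference is that you make the inductive preservation of $D(G_i)\leq D$ via Lemma \ref{lem:divers}(ii) explicit, which the paper leaves implicit; this is a minor clarification, not a different argument.
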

\begin{proof} The number of subgraphs in the last stage is $(t \cdot D)^x$. The maximal clique size in each subgraph is $S/t^x+2$. Hence each vertex  has at most $S/t^x+1$ neighbors of the same clique. Thus, the maximum degree in each subgraph is at most $(S/(t^x)+1) \cdot D$. Therefore, the coloring $\psi$ (of the last stage)  employs at most $ (S/(t^x)+1) \cdot D + 1$ colors. The overall number of colors is $O((t \cdot D)^x \cdot (S/(t^x)+2) \cdot D)$.
\end{proof}

\begin{thm}\label{thm:nd} 
The running time of {\em CD-Coloring }is  $\tilde{O}(x \cdot (\sqrt{D \cdot t} + \log^* n) + \sqrt {D \cdot S/t^x} + \log^* n)$.
\end{thm}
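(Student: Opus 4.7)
The plan is to account for the running time contributions of the two phases of Algorithm 1 separately: the recursive connector-coloring phase (lines 1--8) and the direct coloring of leaf subgraphs (lines 11--13). The total running time is simply the sum of these two, and all invocations at a given recursion level are performed in parallel, so at each level we pay only the time of a single invocation on one of the subgraphs.

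First I would handle the recursive phase. At each recursion level the algorithm builds a connector $G'$ of the current subgraph (which has diversity at most $D$ by Lemma \ref{lem:divers}(ii)) and then invokes the $(\Delta'+1)$-vertex-coloring algorithm of \cite{FHK15}. By Lemma \ref{lem:cdeg} the connector has maximum degree at most $D(t-1)$, so this coloring costs $\tilde{O}(\sqrt{D \cdot t} + \log^* n)$. This has already been packaged in Theorem \ref{decomRuntime}, so I would invoke it directly to bound the first $x$ levels by $\tilde{O}(x \cdot (\sqrt{D \cdot t} + \log^* n))$.

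Next I would bound the cost of the base case (line 12). After $x$ levels the clique-decomposition is a $((tD)^x, S/t^x + 2)$-decomposition, so each leaf subgraph $G_i$ has maximal clique size at most $S/t^x + 2$. Since the diversity of $G_i$ is at most $D$ (again by Lemma \ref{lem:divers}(ii)), each vertex in $G_i$ has degree at most $D \cdot (S/t^x + 1)$. Invoking \cite{FHK15} once more, in parallel on all leaf subgraphs, yields a proper coloring of each $G_i$ in time $\tilde{O}(\sqrt{D \cdot S/t^x} + \log^* n)$.

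Summing the recursive-phase bound and the base-case bound gives exactly $\tilde{O}(x \cdot (\sqrt{D \cdot t} + \log^* n) + \sqrt{D \cdot S/t^x} + \log^* n)$, as claimed. There is essentially no combinatorial obstacle here; the only subtlety is to confirm that the recursive invocations at a single level can be run in parallel without any coordination overhead, which follows because they operate on vertex-disjoint induced subgraphs and Lemma \ref{lem:divers} guarantees that the diversity parameter $D$ does not grow as we recurse, so the per-level bound $\tilde{O}(\sqrt{D \cdot t} + \log^* n)$ is valid at every level.
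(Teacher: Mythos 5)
Your proof is correct and takes essentially the same approach as the paper: invoke Theorem \ref{decomRuntime} for the $x$ recursive connector-coloring levels, then bound the degree of each leaf subgraph by $D(S/t^x+1)$ via the clique-size guarantee and diversity bound, and apply \cite{FHK15} once more for the base case. The extra remark about parallel vertex-disjoint subgraphs is a harmless elaboration the paper leaves implicit.
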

\begin{proof} By Theorem \ref{decomRuntime}, computing the decomposition requires $\tilde{O}(x \cdot (\sqrt{D \cdot t} + \log^* n))$ time. The maximal clique size of the decomposition is $k \leq S/(t^x)+2$. This means that each vertex $v$ of $G_i$ has degree of at most $D \cdot (S/t^x+1)$. By \cite {FHK15}, the running time of the final stage of the algorithm is $\tilde{O}(\sqrt {D \cdot S/t^x} + \log^* n)$. %Thus, we obtain the running time in the theorem.
\end{proof}

\section{Choosing $x$ and $t$}
\label{Sc:chooseParamrs}
%@TODO: in this chapter the $O$-notation should be corrected to $\tilde{O}$
%\newline
Theorems \ref{colCount} and \ref{thm:nd} are optimized by selecting $t=S^{1/(x+1)}$. The $\tilde{O}(x \log^* n)$ factor can be eliminated, and only a single factor of $O(\log^* n)$ will remain. Specifically, we can compute a proper $O(\Delta^2)$-coloring once in the beginning using the algorithm of Linial \cite{L87}, and then employ these colors instead of IDs. Moreover, whenever distinct IDs are required for subgraphs of maximum degree $\Delta'$, we can compute $O(\Delta'^2)$-coloring of these subgraphs from an $O(\Delta^2)$-coloring of the input graph within $\log^* {\Delta} - \log^* \Delta'$ time.  (See \cite{BE09} for an analogous argument.) Consequently,  each additional $\log^*n$ term is replaced by a $\log^* {\Delta'} - \log^* \Delta''$ term, where $\Delta'$ is the degree of the previously computed subgraph, and $\Delta''$ is the degree of the currently computed subgraph (which is a subgraph of the former). Hence, the overall running time of these invocations is $O(\log^* n)$.

Note that for sufficiently large $S$, the expression $S/t^x = S/(S^{(x/x + 1)}) = S^{1/(x+1)}$ is super-constant for all $1 \leq x \leq o(\log S)$. In this range it holds that $S/t^x + 2 \leq (1 + 2t^x/S) \cdot S/t^x \leq (1 + \mu) S/t^x$, for an arbitrarily small constant $\mu > 0$. We note that actually $\mu = o(1)$. (If $S$ is a constant, we directly obtain a $(D(S-1)+1)$-coloring in $\tilde{O}(\sqrt{D} + \log^*n)$ time.) Hence, we obtain the following theorem.

\begin{thm} \label{colparamtr}
For all integers $1 \leq x \leq o(\log S)$, we compute $((1+\mu)S \cdot D^{x+1})$-coloring within $\tilde{O}(x \cdot \sqrt{D \cdot S^{1/(x+1)}} + \log^* n)$ running time.
\end{thm}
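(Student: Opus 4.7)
The plan is to instantiate Theorems \ref{colCount} and \ref{thm:nd} with the explicit choice $t = \lceil S^{1/(x+1)} \rceil$, and then tidy up the resulting expressions using the observations stated just above the theorem.

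First I would verify the color count. Plugging $t = S^{1/(x+1)}$ into Theorem \ref{colCount} gives $O((t\cdot D)^x \cdot (S/t^x + 2) \cdot D) = O(S^{x/(x+1)} \cdot D^x \cdot (S^{1/(x+1)} + 2) \cdot D)$. Since the range $1 \le x \le o(\log S)$ forces $S/t^x = S^{1/(x+1)}$ to be super-constant, the additive $2$ can be absorbed into a $(1+\mu)$ factor with $\mu = o(1)$, exactly as the paragraph preceding the statement sets up. Using $S^{x/(x+1)} \cdot S^{1/(x+1)} = S$, this simplifies to at most $(1+\mu) \cdot S \cdot D^{x+1}$ colors, matching the claim.

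Next I would substitute the same $t$ into Theorem \ref{thm:nd}. The decomposition phase contributes $\tilde{O}(x \cdot (\sqrt{Dt} + \log^* n)) = \tilde{O}(x \cdot \sqrt{D \cdot S^{1/(x+1)}} + x \log^* n)$, while the final direct-coloring phase contributes $\tilde{O}(\sqrt{D \cdot S/t^x} + \log^* n)$. A convenient cancellation occurs here: $D\cdot t = D \cdot S/t^x = D \cdot S^{1/(x+1)}$, so the two square-root terms coincide in order and together give $\tilde{O}(x \cdot \sqrt{D \cdot S^{1/(x+1)}} + x \log^* n)$.

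The only thing that does not yet match the stated bound is that $x\log^* n$ summand, which must collapse to a single $O(\log^* n)$. The most delicate part of the argument will therefore be the $\log^*$ bookkeeping. To handle it I would invoke the standard trick spelled out in the paragraph preceding the statement: use Linial's algorithm once, at the very start, to compute an $O(\Delta^2)$-coloring of $G$ in $O(\log^* n)$ rounds, and then use those color labels as replacement IDs throughout the recursion. At each subsequent recursion level, refining the previous-level $O((\Delta')^2)$-coloring into an $O((\Delta'')^2)$-coloring of the current subgraph of maximum degree $\Delta'' \le \Delta'$ costs only $\log^* \Delta' - \log^* \Delta''$ rounds, and these differences telescope to $O(\log^* n)$ across all $x$ recursion levels of Algorithm~1 together with the final direct-coloring invocation. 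This replaces the $x\log^* n$ term with a single $O(\log^* n)$ summand, yielding the running time $\tilde{O}(x \cdot \sqrt{D \cdot S^{1/(x+1)}} + \log^* n)$ claimed in the theorem.
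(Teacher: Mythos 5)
Your proposal matches the paper's argument essentially line for line: you set $t \approx S^{1/(x+1)}$, plug into Theorems \ref{colCount} and \ref{thm:nd}, absorb the additive $+2$ into the $(1+\mu)$ factor using the observation that $S/t^x = S^{1/(x+1)}$ is super-constant for $x \leq o(\log S)$, and collapse the $x\log^* n$ term to a single $O(\log^* n)$ via the Linial-then-telescoping-$O(\Delta^2)$-coloring trick. This is precisely the reasoning the paper uses in the two paragraphs preceding the theorem statement, so the proof is correct and equivalent.
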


%We note that by a more accurate analysis we can obtain better results as follows.
We refine the analysis and improve upon Theorem \ref{colparamtr}. This is summarized below in Theorem \ref{colCountb}. Its proof is in Appendix B.
\begin{thm}\label{colCountb} We compute a $(D^{x+1} S)$-coloring, within  $\tilde{O}(x \cdot (\sqrt{D \cdot t} + \log^* n) + \sqrt {D \cdot S/t^x} + \log^* n)$ time.
\end{thm}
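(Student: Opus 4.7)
The plan is to refine the color-counting in Algorithm 1 so that the additive slack in the clique-size bound at each level is absorbed exactly by the multiplicative slack in $\gamma = D(t-1) + 1$ relative to $Dt$, thereby eliminating the $(1+\mu)$ factor that appears in Theorem \ref{colparamtr}. To this end, I would first sharpen the estimate on the maximum clique size $S_x$ after $x$ recursion levels: by Lemma \ref{lem:divers}(i), each level replaces a clique of size $s$ by cliques of size at most $\lceil s/t \rceil$, so iterating the identity $\lceil \lceil m/a \rceil / b \rceil = \lceil m/(ab) \rceil$ yields $S_x \leq \lceil S/t^x \rceil$, which is strictly tighter than the bound $S/t^x + 2$ used in the proof of Theorem \ref{colCount}.

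Next I would count the total number of colors. Each vertex's final color is the tuple $\langle i_1, \ldots, i_x, \psi(v) \rangle$, where $i_j$ is the color-class index at recursion level $j$ (one of at most $\gamma = D(t-1) + 1$ values) and $\psi$ is the final direct coloring (at most $D(S_x - 1) + 1$ values by Lemma \ref{lem:deg}). Hence the total color count is at most $\gamma^x \cdot (D(S_x - 1) + 1)$. With $t$ chosen near $S^{1/(x+1)}$ so that $\lceil S/t^x \rceil \leq t$, we obtain $S_x \leq t$, and therefore $D(S_x - 1) + 1 \leq D(t-1) + 1 = \gamma$. The total bound then becomes $\gamma^{x+1} = (Dt)^{x+1}\bigl(1 - (D-1)/(Dt)\bigr)^{x+1}$, and the factor $\bigl(1 - (D-1)/(Dt)\bigr)^{x+1} < 1$ absorbs the small overshoot $t^{x+1} - S$ caused by integer rounding of $t$, yielding $\gamma^{x+1} \leq D^{x+1} S$.

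The running time analysis is identical to that of Theorem \ref{thm:nd}, since the algorithm itself is unchanged and only the color-count analysis is refined. The main obstacle is precisely this last simplification: matching $(Dt)^{x+1}\bigl(1 - (D-1)/(Dt)\bigr)^{x+1}$ against $D^{x+1} S$. This requires $t$ close enough to $S^{1/(x+1)}$ that $t^{x+1} - S$ is small, and a careful estimate (for instance via Bernoulli's inequality applied to $\bigl(1 - (D-1)/(Dt)\bigr)^{x+1}$) to show that the slack contributed by the factor $(D-1)/(Dt)$ dominates the overshoot. This delicate cancellation between the ceiling-rounding slack and the tight form $\gamma = D(t-1)+1$ (rather than the loose $Dt$) is exactly what enables the refined bound $D^{x+1} S$ to replace the $(1+\mu) D^{x+1} S$ of Theorem \ref{colparamtr}.
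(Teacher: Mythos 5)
Your observation that the ceiling identity gives the tighter bound $S_x \le \lceil S/t^x\rceil$ is correct and indeed sharper than the geometric-series bound in Theorem~\ref{colCount}. However, the central step of your argument has a genuine gap that cannot be repaired in the form you propose.

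You require two things simultaneously: (a) $\lceil S/t^x\rceil \le t$, so that $D(S_x-1)+1 \le \gamma$, and (b) $\gamma^{x+1} = (D(t-1)+1)^{x+1} \le D^{x+1}S$. Condition (a) forces $t^{x+1}\ge S$, i.e.\ $t\ge S^{1/(x+1)}$, so $t=\lceil S^{1/(x+1)}\rceil$. But condition (b) is, after taking $(x+1)$-th roots, exactly $t-\tfrac{D-1}{D}\le S^{1/(x+1)}$, i.e.\ $t\le S^{1/(x+1)}+\tfrac{D-1}{D}$. Since $\tfrac{D-1}{D}<1$, the interval $[S^{1/(x+1)},\,S^{1/(x+1)}+\tfrac{D-1}{D}]$ has length less than $1$ and need not contain an integer. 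A concrete counterexample: $D=2$, $x=1$, $S=5$ gives $t=\lceil\sqrt 5\,\rceil=3$, $\gamma=D(t-1)+1=5$, and $\gamma^2=25>20=D^2S$. So the ``delicate cancellation'' you invoke does not actually close — the slack $1-\tfrac{D-1}{Dt}$ per factor is too small to absorb the rounding overshoot $t^{x+1}-S$, which scales like $(x+1)S^{x/(x+1)}$ in the worst case. Moreover, replacing $D(S_x-1)+1$ by $\gamma$ throws away exactly the slack you need: in the same example the true quantity $\gamma^x\cdot(D(S_x-1)+1)=5\cdot 3=15\le 20$ is fine, but your weakened $\gamma^{x+1}=25$ is not.

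The paper avoids this by never collapsing to $\gamma^{x+1}$. It runs an induction on the recursion depth: algorithm $\mathcal{A}_{i+1}$ chooses $t=\lfloor S^{1/(i+2)}\rfloor$ (so $t^{x+1}\le S$, the opposite rounding from yours), colors the connector with $D(t-1)+1$ colors, and invokes $\mathcal{A}_i$ (which inductively yields $D^{i+1}k$ colors for clique size $k=\lceil S/t\rceil\le S/t+1$). The key inequality
\[
\bigl(D(t-1)+1\bigr)\bigl(D^{i+1}(S/t+1)\bigr)\;\le\; D^{i+2}S
\]
is verified by direct expansion, using $(D-1)\,(S/t)>Dt$, which holds once $S$ exceeds a constant; the base case $\mathcal{A}_1$ gets $D^2S+2$ colors and then applies two rounds of the basic color reduction to reach $D^2S$ exactly. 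It is precisely this level-by-level accounting — where the multiplicative slack from $D(t-1)+1$ versus $Dt$ at the \emph{current} level is matched against the additive ceiling error at the \emph{next} level — that makes the bound close, and that your one-shot $\gamma^{x+1}$ estimate loses.
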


\def\APPa{
\begin{proof}
In the first stage, we choose $x = 1$ and $t = \left \lfloor \sqrt{S}\right \rfloor$. We construct the connector accordingly. Now each vertex has $D \cdot (t - 1)$ neighbors in the connector. We color the connector using \cite{FHK15} with $D \cdot (t - 1) + 1$ colors. Denote this coloring $\varphi$. Each color class of $\varphi$ induces a subgraph with cliques of maximum size $k = \left \lceil S /t \right \rceil$. Each vertex in each subgraph belongs to at most $D$ cliques, therefore the maximum degree in each subgraph is $D \cdot (k - 1)$. Now we color each subgraph using \cite{FHK15} again with $D \cdot (k - 1) + 1$ colors. Denote this coloring by $\psi$. The combination of the two colorings $\langle \varphi,\psi \rangle$ is a proper coloring of the entire graph. The overall number of colors is
\begin{eqnarray*}  & & D \cdot (t - 1) + 1) \cdot (D \cdot (k - 1) + 1)  \\ & = & D^2tk - Dt(D - 1) - Dk(D - 1) + (D-1)^2 \\ & \leq & D^2 S + D^2 \left \lfloor \sqrt S \right \rfloor - Dt(D - 1) - Dk(D - 1) + (D-1)^2 \\ & = & D^2 S + D^2 \left \lfloor \sqrt S \right \rfloor - D (D - 1)(t + k) + (D - 1)^2 \\ & \leq &
 D^2 S + D^2 \left \lfloor \sqrt S \right \rfloor - (D^2 -D )(2\left \lfloor \sqrt S \right \rfloor - 1) \\
 & \leq &
 D^2 S + D^2 \left \lfloor \sqrt S \right \rfloor - D^2 2\left \lfloor \sqrt S \right \rfloor + D^2 + 2D \left \lfloor \sqrt S \right \rfloor - D \\
 & = &
 D^2 S - D^2 \left \lfloor \sqrt S \right \rfloor + D^2 + 2D \left \lfloor \sqrt S \right \rfloor - D \\  & \leq & D^2 S + 2,
 \end{eqnarray*}
 
For $D \geq 2$ and $S$ larger than some constant. Since $DS$ is larger than the maximum degree of the graph, we can apply the basic reduction for $2$ rounds, and obtain $D^2S$-coloring. Recall that the basic color reduction computes a $(\Delta + 1)$-coloring from a $(\Delta + r)$-coloring within $r -1$ rounds, for any integer $r > 1$. This is achieved by iterating over the color classes $\Delta + 1, \Delta + r-1,..., \Delta + 2$, and for each color class, selecting in parallel proper colors from $[\Delta + 1]$.  %Using the reduction of Kuhn and Wattenhofer, we obtain $D^2S - D$ coloring within an additional round.
 
Denote the resulting algorithm ${\cal A}_1$. Next, we describe how to obtain an algorithm ${\cal A}_{i + 1}$ from ${\cal A}_i$ for $i = 1,2,\dots$, where ${\cal A}_i$ is an algorithm that computes a $(D^{i + 1} S)$-coloring. We choose $x = i + 1$,
 $t = \left \lfloor S^{1/(i + 1)} \right \rfloor$. Algorithm ${\cal A}_{i+1}$ starts by constructing a connector (with the parameter $t$) of the input graph $G$. Next, the algorithm computes a coloring $\hat{\varphi}$ of the connector using $D (t - 1) +1$ colors. 
Each subgraph induced by a color class of $\hat{\varphi}$ contains cliques of maximum size $k = \left \lceil S/t \right \rceil$.
Then the algorithm computes coloring $\hat{\psi}$ of subgraphs induced by color classes of $\hat{\varphi}$, using ${\cal A}_i$. The number of colors in each such coloring is $D^{i + 1} k$. %and within an additional round that employ Kuhn-wattenhofer education \cite{?}, we obtain number of colors  $D^{i + 1} k - D^i$.
The overall number of colors is at most

\begin{eqnarray*}
(D(t - 1) + 1) (D^{i + 1} k) & \leq &
(D(t - 1) + 1) (D^{i + 1} (S/t + 1)) \\ & = & 
D^{i + 2}S + D^{i + 2}t - D^{i + 2}(S/t) - D^{i + 2} + D^{i + 1}(S/t) + D^{i+1} \\
& \leq & D^{i + 2}S + D^{i + 2}t - (D-1)D^{i + 1}(S/t) \\ & \leq & D^{i + 2}S.
\end{eqnarray*}

The last inequality holds for any $S$ larger than some constant since $(D -1)(S/t) > Dt$, because $t = \left \lfloor S^{1/(i + 1)} \right \rfloor$ and $i \geq 2$.

Now we analyze the running time of the algorithm for some chosen number of iterations. Specifically, we analyze the running time of ${\cal A}_i$ for all $i=1,2,\dots$.
The running time of ${\cal A}_1$ is the time of coloring the connector and then coloring each color class in the resulting coloring. Coloring the connector requires $\tilde{O}(\sqrt {D(t - 1)} + \log^* n) = \tilde{O}(\sqrt {Dt} + \log^* n)$ rounds by \cite{FHK15}. Coloring the subgraphs induced by different color classes requires $\tilde{O}(\sqrt {D(\left \lceil S /t \right \rceil - 1)} + \log^* n) = \tilde{O}(\sqrt{D \cdot S/t} + \log^* n)$ time by \cite{FHK15}. Thus, ${\cal A}_1$ has running time $\tilde{O}(\sqrt {Dt} + \sqrt{D \cdot S/t} + \log^* n)$.

%Suppose that we are given a subgraph $G' \subseteq G$.
Next, we analyze the algorithm ${\cal A}_{i + 1}$ for coloring $G$. We assume inductively that ${\cal A}_i$ runs on any subgraph  $G'$ of $G$ within time $T_i = \tilde{O}(i \cdot (\sqrt{D \cdot t} + \log^* n) + \sqrt {D \cdot S'/t^i} + \log^* n)$, where $S'$ is the maximal clique size in $G'$. We prove that the running time of ${\cal A}_{i+1}$ executed on $G$ is $\tilde{O}(\sqrt{D \cdot t} + \log^* n) + T_i$. %\tilde{O}(i  \cdot (\sqrt{D \cdot t} + \log^* n) + \sqrt {D \cdot S_{G}/t^{i+1}} + \log^* n)$.
Denote by $S = S_G$  the maximal clique size in $G$.
In ${\cal A}_{i+1}$, it holds that $t = \left \lfloor S^{1/(i + 2)} \right \rfloor$. 
Algorithm ${\cal A}_{i+1}$ starts by using \cite{FHK15} to color the connector within time $\tilde{O}(\sqrt{Dt}+\log^* n)$.
The maximal clique size in the subgraphs induced by color classes of the resulting coloring is at most $k = \left \lceil S/t \right \rceil$. Hence, for each such subgraph $G'$, it holds that  $ S' \leq \left \lceil S/t \right \rceil$.
%For each such subgraph $G'$ we denote by $S_{G'}$ the size of the maximal clique in $G'$. 
Consequently, the algorithm ${\cal A}_i$ colors each subgraph inuced by a color class within time $T_i = \tilde{O}(i \cdot (\sqrt{D \cdot t} + \log^* n) + \sqrt {D \cdot S'/t^i} + \log^* n) = \tilde{O}(i \cdot (\sqrt{D \cdot t} + \log^* n) + \sqrt {D \cdot S/t^{i+1}} + \log^* n)$.
The overall running time of ${\cal A}_{i+1}$ executed on $G$ is $\tilde{O}(\sqrt{Dt}+\log^* n) + T_i = \tilde{O}((i + 1) \cdot (\sqrt{D \cdot t} + \log^* n) + \sqrt {D \cdot S/t^{i+1}} + \log^* n)$.
\end{proof}}
We note that for a special case of line-graphs it holds that $D(G)=2$, since each vertex in a line-graph (in which cliques are identified by vertices of the original graph) can belong to at most 2 maximal cliques.\footnote{The number of all maximal cliques that a vertex belongs to in a line-graph may be larger than $2$. However, if $G$ is the line graph of an original  graph $\bar{G}$, then each vertex of $\bar{G}$ corresponds to a clique in $G$. Since each vertex in $G$ is an edge with two endpoints in $\bar{G}$, the vertex belongs to two identified maximal cliques in $G$. Since the pairs of identified cliques that each vertex belongs to contain all its neighbors, the diversity is $2$.} The maximal clique size of a clique in a line graph of an input graph $G$ is at most $\max\{ \Delta= \Delta(G), 3\}$.
In addition, as we have explained, the term of $O(\log^* n)$ can appear exactly once in the running time, if we invoke only once the algorithm of Linial \cite{L87}, and then employ proper $O(\Delta^2)$-colorings instead of an $n$-coloring.  Hence, by setting $t = S^{1/{x + 1}}$ in Theorem \ref{colCountb}, we obtain the following result. 

\begin{thm} \label{colorimproved}
(i) For any positive integer $x \in o(\log S)$, we compute a $(D^{x+1}S)$-coloring of graphs with diversity $D$ within $\tilde{O}(x \cdot \sqrt{DS^{1/(x+1)}} + \log^* n)$ time. \\ 
(ii) For any positive integer $x \in o(\log \Delta)$ , we compute a $(2^{x+1}\Delta)$-edge-coloring within $\tilde{O}(x \cdot \Delta^{\frac{1}{2x + 2}} + \log^* n)$ time. 
\end{thm}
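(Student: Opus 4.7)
The plan is to instantiate Theorem~\ref{colCountb} with the balanced choice $t = S^{1/(x+1)}$, collapse the several copies of $\log^* n$ arising from recursive applications of~\cite{FHK15} into a single one via Linial's trick, and then specialize to line graphs to obtain (ii). For part (i), observe that setting $t = S^{1/(x+1)}$ forces $t^x = S^{x/(x+1)}$, so $S/t^x = S^{1/(x+1)} = t$, and both running-time terms of Theorem~\ref{colCountb} collapse to $\sqrt{D\cdot S^{1/(x+1)}}$. Direct substitution yields a $(D^{x+1}S)$-coloring in time $\tilde{O}(x\cdot \sqrt{D\cdot S^{1/(x+1)}} + x\log^* n)$. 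Strictly $t$ must be a positive integer, so I would take $t = \lfloor S^{1/(x+1)}\rfloor$; under the hypothesis $x\in o(\log S)$ we have $S^{1/(x+1)}\to\infty$, so the rounding changes the bound only by constants absorbed into the $\tilde{O}(\cdot)$.

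Next I would reduce the $x\log^* n$ factor to a single $\log^* n$, following the argument given at the start of Section~\ref{Sc:chooseParamrs}: run Linial's algorithm once at the outset to produce an $O(\Delta^2)$-coloring, use it in place of vertex IDs inside every invocation of the $(\Delta+1)$-coloring algorithm of~\cite{FHK15}, and observe that each subgraph of maximum degree $\Delta''$ inherits its parent's $O(\Delta'^2)$-coloring, which can be refined to an $O(\Delta''^2)$-coloring in only $\log^*\Delta' - \log^*\Delta''$ rounds. Summed across the $x$ recursion levels this telescopes to $O(\log^* n)$ in total, completing part (i).

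For part (ii), I would apply (i) to the line graph $L(G)$, which, as noted in the paragraph preceding the theorem, has diversity $D = 2$ under the natural clique identification (each original vertex yields one maximal clique) and maximal clique size $\max\{\Delta(G),3\} \le \Delta$ for $\Delta \ge 3$. Substituting $D = 2$ and $S \le \Delta$ into (i) yields at most $D^{x+1} S \le 2^{x+1}\Delta$ colors and running time $\tilde{O}\bigl(x\cdot \sqrt{2\cdot \Delta^{1/(x+1)}} + \log^* n\bigr) = \tilde{O}(x\cdot \Delta^{1/(2x+2)} + \log^* n)$, as claimed. The one point that requires care, rather than a genuine obstacle, is ensuring that the integer rounding $t = \lfloor S^{1/(x+1)}\rfloor$ keeps the color count at exactly $D^{x+1}S$ rather than $(1+o(1))D^{x+1}S$; this can be handled by the basic $(\Delta+r)\to(\Delta+1)$ color-reduction step applied a constant number of times to the final coloring, in exactly the manner used at the end of the proof of Theorem~\ref{colCountb}.
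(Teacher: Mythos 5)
Your proposal is correct and follows the paper's own derivation essentially verbatim: instantiate Theorem~\ref{colCountb} with $t = \lfloor S^{1/(x+1)}\rfloor$ so that both running-time terms coincide, collapse the $x\log^* n$ overhead to a single $\log^* n$ via the Linial-based $O(\Delta^2)$-coloring argument from the start of Section~\ref{Sc:chooseParamrs}, and specialize to line graphs with $D=2$ and $S\le\max\{\Delta,3\}$ to get (ii). The only superfluous step is your final worry about $(1+o(1))D^{x+1}S$ colors: Theorem~\ref{colCountb} already states exactly $D^{x+1}S$ colors with the floor rounding of $t$ absorbed into its proof, so no additional color-reduction pass is needed here.
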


We now analyze the results for specific values of $x$.
For $x=1$ we obtain an $(S \cdot D^2)$-coloring within $\tilde{O}(\sqrt{D \cdot S^{\frac{1}{2}}} + \log^* n)$ time.
For $x=2$ we obtain an $(S \cdot D^3)$-coloring within $\tilde{O}(\sqrt{D \cdot S^{\frac{1}{3}}} + \log^* n)$ time.
For $x=3$ we obtain an $(S \cdot D^4)$-coloring within $\tilde{O}(\sqrt{D \cdot S^{\frac{1}{4}}} + \log^* n)$ time, and so forth.
%For $x=\log S$ we obtain a $((1 + \mu)S \cdot D(G)^{\log (S)+1})$-coloring within $O(\log S \cdot \sqrt{D(G) \cdot S^{1/(\log (S)+1)}} + \log^*(n))$ runtime.

%We now consider a special case of line-graphs. In this special case $D(G)=2$, since each vertex in a line-graph can belong to at most 2 maximal cliques. The maximal clique size of a clique in a line graph of an input graph $G$ is at most $\max\{ \Delta= \Delta(G), 3\}$. 
We now take a second look at the above results in the case of line graphs, assuming that $\Delta \geq 3$.
For $x=1$ we obtain a $4\Delta$-coloring within $\tilde{O}(\sqrt[4]{\Delta} + \log^* n)$ time.
For $x=2$ we obtain a $8\Delta$-coloring within $\tilde{O}(\sqrt[6]{\Delta} + \log^* n)$ time.
For $x=3$ we obtain a $16\Delta$-coloring within $\tilde{O}(\sqrt[8]{\Delta} + \log^* n)$ time, etc.
%For $x=logS$ we obtain a $(2(1 + \mu)S^2)$-coloring within $O(\log S \cdot S^{1/2(\log S+1)} + \log^*(n))$ runtime.

Another interesting corollary for line graphs is the following. Since we aim at poly-logarithmic time, we set $x=\log S/(\epsilon \log \log S)$, for some $\epsilon > 0$. By Theorem \ref{colorimproved}, this results in $(S \cdot 2^{\log S/(\epsilon \log \log S)+1}) = 2S^{1+1/(\epsilon \log \log S)}$ colors. The running time we achieve is
\begin{eqnarray*}
& & \tilde{O}(\log S/(\epsilon \log \log S) \cdot \sqrt{2 \cdot S^{1/((\log S/(\epsilon \log \log S))+1)}} + \log^* n)  % $$= O(\log S/(\epsilon \log \log S) \cdot \sqrt{2 \cdot 2^{\log S/((\log S/(\epsilon \log \log S))+1)}} + \log^*n) $$ $$\leq O(\log S/(\epsilon \log \log S) \cdot \sqrt{2 \cdot 2^{\log S/((\log S/(\epsilon \log \log S)))}} + \log^*(n))$$
%$$\leq \tilde{O}(\log S/(\epsilon \log \log S) \cdot \sqrt{2 \cdot 2^{\epsilon \log \log S}} + \log^* n)$$
%& = &\tilde{O}(\log S/(\epsilon \log \log S) \cdot \sqrt{\log^\epsilon(S)} + \log^* n)  
\leq  
\tilde{O}((\log S)^{1+\frac{\epsilon}{2}} + \log^* n).
\end{eqnarray*}

\section{Edge Coloring Using Star-Partition}
\label{Sc:sd}If we would use the above algorithm for coloring edges we would need to simulate the line-graph of the given graph and invoke on it the above algorithm. This simulation requires some local computation and resources. In this section we show a different technique that does not require this simulation. We do this by introducing a different type of connectors.
%\subsection{Star Partition}

Given a graph $G = V,E)$, an edge subset $\hat{E} \subseteq E$ is a {\em star} if there is vertex $u \in V$ that is shared by all edges in $\hat{E}$. The vertex $u$ is the {\em center} of the star.
The partition we present in this section decomposes the given graph $G$ into subgraphs with smaller {\em stars}. This time we partition the edge set rather than the vertex set.
For parameters $p,q > 0$, a {\em $(p,q)$-star-partition} of a graph $G = (V,E)$ is an edge partition $E_1,E_2,\dots,E_p$ of $E$, such that the maximum size of a star in $E_i$ is at most $q$, for all $i \in [p]$. 

We achieve this decomposition in the following way. For some integer $t>1$, each vertex $v$ in $G$ groups its edges into subsets, each of which is of size at most $t$. Each vertex $v \in V$ defines virtual vertices $v_1,v_2,...,v_k$, a vertex for each such subset. These vertices are simulated locally by $v$. We obtain $k = \left \lceil \Delta/t \right \rceil$ virtual vertices per vertex $v$. We define the {\em edge-connector} as $G' = (V', E')$ where $V'$ is the set of all virtual vertices and $E'$ consists of all edges in $E$ but an edge $e'$ of $E'$ replaces the original vertices of its endpoints with the corresponding vertices from $V'$. More precisely, each vertex $v$ enumerates the edges adjacent on it using the set $\{1,2,..,\Delta\}$, such that each edge of $v$ is assigned a distinct number. Now, each edge $(u,v)$ holds two numbers $l(u),l(v)$. For each such edge in $E$, the set $E'$ contains an edge $(u_i,v_j)$, where $i =\left \lceil l(u)/t \right \rceil$ and $j = \left \lceil l(v)/t \right \rceil$. (See Figure 2 in Appendix A.)
%\newline

%%%%%%%%%%%%%%%%%%%%%%%%%5figure
%\begin{figure}
%\includegraphics[width=\textwidth]{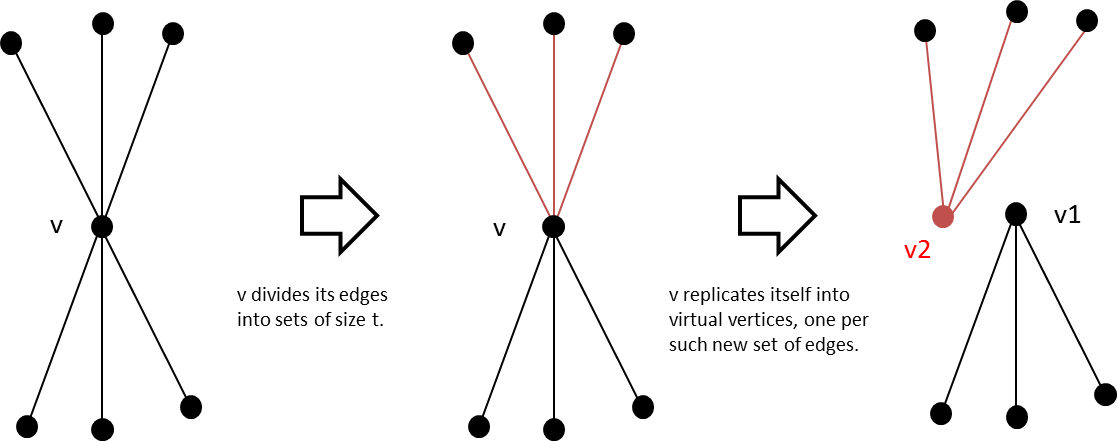}
%\caption{Edge-connector with $t = 3$.}
%\end{figure}

Note that the maximum degree of the edge-connector is $t$. We employ the edge-connector as follows. In the first stage, we edge-color the connector using \cite{FHK15}. We obtain $(2t-1)$-edge-coloring $\varphi$ of the connector within $\tilde{O}(\sqrt t + \log^* n)$ time.

In the second stage we group edges from $G$ that have the same $\varphi$ color into subsets $\hat{E}_1,\hat{E}_2,\dots,\hat{E}_{2t-1}$. Since each vertex $v$ of $G$ has $k$ virtual vertices, and each such virtual vertex has all its edges colored with distinct colors, the number of edges of the same color adjacent on $v$ is at most $k$. Hence, the maximum degree of each $G(E_i)$, $i \in [2t-1]$, is $k = \left \lceil \Delta/t \right \rceil$.
Now we color these subgraphs in parallel using \cite{FHK15}. We obtain $(2k-1)$-edge-coloring $\psi$ in each subgraph within $\tilde{O}(\sqrt k + \log^* n)$ time.

Each edge has now a color which is a combination of two colors. The overall number of colors is $(2t-1) \cdot (2k-1)$. The overall running time of the two stages is $\tilde{O}(\sqrt t + \sqrt k + \log^* n)$.
We note that $(2t-1) \cdot (2k-1) = 4t \cdot \left \lceil \Delta/t \right \rceil - 2 \left \lceil \Delta/t \right \rceil - 2t + 1 \leq 4 \Delta + 2t - 2 \Delta /t +1$.
Let us choose $t = \left \lfloor \sqrt \Delta \right \rfloor$. We get $4 \Delta + 2t - 2 \Delta /t +1 \leq 4 \Delta +1$. Within an additional round the number of colors can be reduced to $4\Delta$. This gives us a $4 \Delta$-coloring algorithm with $\tilde{O}(\Delta^{\frac{1}{4}} + \log^* n)$ time.

As in the case of clique-decomposition we can compute star-partition recursively on each of the subgraphs. To this end, we set $t = \Delta^{1/(x + 1)} $, for a positive integer $x$. As usual, $x$ denotes the number of recursion levels. In each level we compute star-partition of the subgraphs. This increases the number of subgraphs and reduces the maximum size of stars in each subgraph. Once the size of all stars becomes sufficiently small, we color the subgraphs directly using \cite{FHK15}. This scheme is the same as that of Theorem \ref{colorimproved}, and thus we obtain the following result.

\begin{thm} \label{edgcol}
For $x = 1,2,...$, we compute a $(2^{x+1}\Delta)$-edge-coloring within $\tilde{O}(x \cdot \Delta^{\frac{1}{2x + 2}} + \log^* n)$ time. Moreover, this computation does not require maintaining the line graph by the original graph.
\end{thm}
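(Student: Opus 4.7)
The plan is to mirror the recursive scheme that was used for clique-decomposition in Theorem \ref{colorimproved}, but driven by the edge-connector of Section~\ref{Sc:sd} rather than by the vertex-based connector of Section~\ref{Sc:cd}. Since an edge-connector with parameter $t$ has maximum degree $t$, coloring it with \cite{FHK15} yields a $(2t-1)$-edge-coloring in $\tilde O(\sqrt{t}+\log^* n)$ time, and the classes of this coloring induce edge-disjoint subgraphs of $G$ whose maximum degree is at most $k=\lceil \Delta/t\rceil$. This is the exact analogue of the $(t\!\cdot\! D,\,S/t+1)$-clique-decomposition statement that drove the earlier recursion, with the roles $D\leftrightarrow 2$ (an edge has two endpoints), $S\leftrightarrow \Delta$, and ``clique size'' replaced by ``star size / maximum degree.''

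I would then define algorithms $\mathcal{B}_1,\mathcal{B}_2,\dots$, where $\mathcal{B}_x$ computes a $(2^{x+1}\Delta)$-edge-coloring. For $\mathcal{B}_1$, choose $t=\lfloor\sqrt{\Delta}\rfloor$: color the edge-connector with $(2t-1)$ colors and then color each of the resulting degree-$k$ subgraphs directly in parallel with $(2k-1)$ colors via \cite{FHK15}. A short calculation (already done in the text) shows $(2t-1)(2k-1)\leq 4\Delta+1$, reducible to $4\Delta$ colors within one extra round, with total time $\tilde O(\sqrt t+\sqrt{\Delta/t}+\log^*n)=\tilde O(\Delta^{1/4}+\log^*n)$. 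Inductively, $\mathcal{B}_{x+1}$ builds an edge-connector of $G$ with parameter $t=\Delta^{1/(x+2)}$, colors it with at most $2t$ colors, and then recursively invokes $\mathcal{B}_x$ in parallel on each induced subgraph, whose maximum degree is $k\le \lceil\Delta/t\rceil$. Since $\mathcal{B}_x$ uses $2^{x+1}k$ colors by the induction hypothesis, the product yields at most $2t\cdot 2^{x+1}(\Delta/t+1)\leq 2^{x+2}\Delta+\mathrm{lower\ order}$, which can be reduced to $2^{x+2}\Delta$ colors in a constant number of additional rounds via the basic color reduction (exactly as in the proof of Theorem~\ref{colCountb}).

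For the running time I would argue recursively: at each level the outer coloring of the edge-connector costs $\tilde O(\sqrt{t}+\log^*n)=\tilde O(\Delta^{1/(2x+2)}+\log^*n)$, and with $x$ levels this adds up to $\tilde O(x\cdot\Delta^{1/(2x+2)})$. The base of the recursion contributes $\tilde O(\sqrt{\Delta/t^x}+\log^*n)=\tilde O(\Delta^{1/(2x+2)}+\log^*n)$, which is absorbed by the same term. The $\log^*n$ factor can be folded into a single $O(\log^*n)$ summand using Linial's $O(\Delta^2)$-coloring as a universal ID replacement, by the same argument cited for Theorem~\ref{colorimproved}. The final sentence of the theorem follows by inspection of the construction: each vertex simulates its virtual copies locally and each edge can compute its endpoint indices $\lceil \ell(u)/t\rceil,\lceil \ell(v)/t\rceil$ after a single exchange of the local enumeration values $\ell(\cdot)$ with its neighbor, so the edge-connector is manipulated entirely in $G$ without ever materializing the line graph.

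The main obstacle is bookkeeping the constant in the color count so that the recursive product really telescopes to $2^{x+1}\Delta$ rather than to $2^{x+1}\Delta$ plus lower-order slack that would compound across levels; this is exactly the issue that was handled in the proof of Theorem~\ref{colCountb} by peeling off additive $\Delta^{1/(x+1)}$-type terms and absorbing them via a constant number of basic-color-reduction rounds. Once that accounting is done carefully, the recursion closes and Theorem~\ref{edgcol} follows.
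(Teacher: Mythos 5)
Your proposal follows the same route as the paper: build the edge-connector with parameter $t=\Delta^{1/(x+1)}$, color it with \cite{FHK15} to get a star-partition, recurse on the color classes, and reuse the telescoping color-count and the $O(\log^*n)$-folding from the clique-decomposition analysis (Theorems~\ref{colCountb} and~\ref{colorimproved}). The paper itself phrases the recursive step as ``this scheme is the same as that of Theorem~\ref{colorimproved},'' which you spell out explicitly as the family $\mathcal{B}_1,\mathcal{B}_2,\dots$; the one soft spot you should tighten is that bounding the connector coloring by $2t$ (rather than $2t-1$) leaves a $2^{x+2}t$ surplus that cannot be killed in ``a constant number'' of basic reduction rounds—you need the $-1$ so the $-2^{x+1}\Delta/t$ term from $(2t-1)(\Delta/t+1)$ cancels the $+2^{x+2}t$, exactly as in the paper's $D(t-1)+1$ accounting—but you correctly identify that this is the bookkeeping already done in Theorem~\ref{colCountb}.
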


%A star in a graph $G$ is a clique in the line-graph of $G$.
%Consequently, we can treat a clique as a star in all lemmas and theorems of sections \ref{Sc:cd} and \ref{Sc:chooseParamrs}. Hence, they still hold for the edge-coloring case, without the need to construct line graphs, but using star-decomposition and edge-connectors of the original graph.

\section{Edge-Coloring with $\Delta+o(\Delta)$ Colors of Bounded Arboricity Graphs}
An $H$-partition with degree $d$ of a graph $G = (V,E)$ is a partition of $V$ into $\ell$ subsets $H_1,H_2,...,H_{\ell}$, such that the number of neighbors of each $v \in H_i$, $i = 1,2,...,\ell$, in $\cup_{j=i}^{\ell} H_j$ is at most $d$. Given a graph $G$ with arboricity $a = a(G)$, an $H$-partition with degree $d = (2 + \epsilon)a$, for any constant $\epsilon > 0$, can be constructed in $O(\log n)$ time \cite{BE08}. Using this partition we obtain a $(\Delta + O(a))$-edge-coloring of $G$ within $O(a \log n)$ time. This is achieved using the following auxiliary algorithm.

%\subsection{Union of edge-colored graphs} \label{sc:unionec}
Suppose that we are given a graph $G = (V = A \cup B, E)$, in which $A \cap B = \emptyset$. 
Suppose that the maximum degree of $G(A)$ is at most $d$, and that for each vertex $v \in A$, the number of neighbors of $v$ in $A \cup B$ is also bounded by $d$. 
Moreover, the graph $G(A)$ already has a proper edge-coloring using $O(d)$ colors, and the graph $G(B)$ already has a proper edge-coloring using $\Delta + O(d)$ colors. Next we devise an algorithm that obtains a unified coloring of all the edges of $G$. The number of employed colors is $\Delta + O(d)$, and the running time is $O(d)$.

This algorithm proceeds as follows. Each vertex in $A$ labels its edges which cross to $B$ using unique labels from $\{1,2, \dots ,d\}$. Now we perform $d$ rounds. In each round $i, 1 \leq i \leq d$, all edges with label $i$ become active. (Recall that each such edge has endpoints in $A$ and $B$.) Denote by $E''_i$ such active edges of round $i$. For each edge $e \in E''_i$, its endpoint in $B$  collects the colors from all the edges incident to $e$. Then, the endpoint finds a new color available for $e$. Since $e$ has at most $d- 1 + \Delta -1$ neighboring edges, there must be an available color within a palette of size $\Delta + d - 1)$. Moreover, if a vertex $v$ in $B$ is shared by several crossing edges, it is still able to assign them colors in the same round. Indeed, the endpoints in $A$ of these crossing edges are not shared by multiple edges of  $E''_i$. (Otherwise, a vertex in $A$ would have assigned the same label to some of its edges. However, it has assigned unique labels.) Hence, all color assignments made by vertices in the same round are performed on sets of edges, such that no enpoint is shared by multiple sets. Therefore, any assignment of distinct available colors by $v$ to its edge set of label $i$ results in a proper coloring. 
Overall, there are $d$ such rounds until the algorithm terminates. Then, the entire graph $G$ is properly colored using $\Delta + O(d)$ colors. This discussion is summarized in the next lemma.

\begin{lem} \label{unionec}
Suppose that we are given a graph $G = (V = A \cup B, E)$, $A \cap B = \emptyset$, and the degrees of vertices of $A$ in $G$ are at most $d$. Moreover, the graph $G(A)$ already has a proper edge-coloring using $O(d)$ colors and the graph $G(B)$ already has a proper edge-coloring using $\Delta + O(d)$ colors. Then a proper edge-coloring of $G$ using $\Delta + O(d)$ colors can be computed in $O(d)$ time.
%Suppose that we are given a graph $G = (V = A \cup B, E)$, such that $A$ is an independent set, the degree of vertices of $A$ is $O(a)$, and the degree of vertices of $B$ is at most $\Delta$. Moreover, the graph $G(B)$ already has a proper edge-coloring using $\Delta + O(a)$ colors. Then a proper edge-coloring of $G$ using $\Delta + O(a)$ colors can be computed in $O(a)$ time.
%In particular, if $G$ is bipartite, we can color it with $\Delta + O(a)$ colors in $O(a)$ time.
\end{lem}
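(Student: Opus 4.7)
The plan is to extend the given colorings of $G(A)$ and $G(B)$ to the crossing edges $E_{AB} = E \setminus (E(G(A)) \cup E(G(B)))$, using the union of the two input palettes---of total size $\Delta + O(d)$---as the final color set. Every crossing edge must receive a color distinct from those of all its adjacent edges, whether those neighbors were colored in the given input colorings or assigned during the extension procedure.

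First I would have each vertex $a \in A$ assign unique labels from $\{1, 2, \ldots, d\}$ to its crossing edges; this is possible because $\deg_G(a) \leq d$. The crossing edges are then processed in $d$ synchronous rounds, where in round $i$ every crossing edge whose $A$-endpoint labels it $i$ becomes active and is colored by its $B$-endpoint. A straightforward counting shows that, for any active edge $e = (a,b)$, the forbidden colors contributed by already-colored edges incident to $a$ or $b$ total at most $(d-1) + (\Delta - 1) = \Delta + d - 2$, so at least one color in the $\Delta + O(d)$-sized palette remains legal for $e$.

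The main obstacle I expect is verifying that the parallel color choices within a single round are mutually consistent. The key observation is that the unique-label property guarantees that any two edges active in round $i$ have distinct $A$-endpoints; hence the only possible within-round conflict is between edges that share a $B$-endpoint. All such edges are colored by that shared $B$-endpoint in a single local computation, which orders them arbitrarily and colors them one by one, each time respecting the same $\Delta + d - 2$ bound on forbidden colors (since already-chosen same-round siblings are automatically counted among the at-most $\Delta - 1$ neighbors at $b$). Thus every round preserves a proper partial edge-coloring, after $d$ rounds all of $E$ is properly colored using $\Delta + O(d)$ colors, and the total running time is $O(d)$ rounds as claimed.
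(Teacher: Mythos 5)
Your proof follows essentially the same approach as the paper's: label the crossing edges at each $A$-vertex with unique labels from $\{1,\dots,d\}$, process them in $d$ rounds, have the $B$-endpoint assign colors, bound the forbidden colors by $(d-1)+(\Delta-1)$, and resolve within-round consistency by observing that active edges sharing a $B$-endpoint never share an $A$-endpoint and can be colored sequentially by that $B$-vertex in one local step. The argument and the bookkeeping are both correct and match the paper's reasoning.
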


Now we prove the following theorem using this lemma.

\begin{thm} \label{arbcoloring}
A $(\Delta + O(a))$-edge-coloring of $G$ can be computed in $O(a \log n)$ time.
\end{thm}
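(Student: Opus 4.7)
The plan is to combine the $H$-partition of \cite{BE08} with a top-down sweep of its layers using Lemma \ref{unionec}. First, invoke the algorithm of \cite{BE08} to compute, in $O(\log n)$ rounds, an $H$-partition $H_1, H_2, \dots, H_\ell$ of $V$ with degree $d = (2 + \epsilon)a$, where $\ell = O(\log n)$. By the defining property of the $H$-partition, every vertex $v \in H_i$ has at most $d$ neighbors in $\bigcup_{j \geq i} H_j$; in particular, each induced subgraph $G(H_i)$ has maximum degree at most $d$.

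Next, edge-color $G$ in stages, processing layers from top to bottom. Write $B_i = \bigcup_{j \geq i} H_j$, so that $B_1 = V$ and $B_{\ell + 1} = \emptyset$. For the base stage, edge-color $G(H_\ell) = G(B_\ell)$ directly: since its maximum degree is at most $d$, the Panconesi-Rizzi procedure \cite{PR01} yields a proper $(2d - 1)$-edge-coloring in $O(d + \log^* n)$ rounds, well within $\Delta + O(d)$ colors. For the inductive step, assume $G(B_i)$ already has a proper $(\Delta + O(d))$-edge-coloring. Set $A := H_{i - 1}$ and $B := B_i$; then $A \cap B = \emptyset$, and by the $H$-partition guarantee every vertex of $A$ has at most $d$ neighbors in $A \cup B = B_{i - 1}$. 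First edge-color $G(A)$ (of maximum degree at most $d$) with $O(d)$ colors in $O(d)$ time via Panconesi-Rizzi; then invoke Lemma \ref{unionec} to merge the two colorings into a proper $(\Delta + O(d))$-edge-coloring of $G(B_{i - 1})$ in another $O(d)$ time. After $\ell$ such stages one obtains a proper $(\Delta + O(a))$-edge-coloring of $G = G(B_1)$.

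For the running time, $O(\log n)$ rounds are spent on the $H$-partition and each of the $\ell = O(\log n)$ stages contributes $O(d) = O(a)$ rounds, yielding $O(a \log n)$ overall. The one potential nuisance is the per-stage $\log^* n$ term inherited from Panconesi-Rizzi; this is handled by precomputing a proper $O(\Delta^2)$-vertex-coloring once at the outset using Linial's algorithm \cite{L87} and reusing it as an ID palette in all subsequent calls, as in \cite{BE09}. The main conceptual point, and the step where one must be careful, is that the palette in the inductive step must remain of size $\Delta + O(d)$ rather than blow up layer by layer; this is precisely the input-to-output palette preservation guaranteed by Lemma \ref{unionec}, which is why a single application of the lemma per stage suffices across all $\Theta(\log n)$ layers.
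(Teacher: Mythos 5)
Your proposal is correct and follows essentially the same route as the paper: compute the $H$-partition of \cite{BE08}, edge-color the $H$-sets with $O(a)$ colors, and then sweep the layers from $H_\ell$ down to $H_1$, merging each $H_i$ with the already-colored higher layers via Lemma~\ref{unionec} within a fixed palette of size $\Delta + O(a)$. The only cosmetic difference is that you color each $G(H_i)$ with Panconesi--Rizzi inside the corresponding merge stage, whereas the paper colors all $H$-sets in parallel at the outset using its own Section~\ref{Sc:sd} routine; both choices are subsumed by the $O(a)$ per-stage budget and do not affect correctness or the $O(a\log n)$ bound.
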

\begin{proof}
Compute an $H$-partition of $G$ using \cite{BE08} in $O(\log n)$ time. Note that the maximum degree of each $H$-set is $O(a)$. Compute an $O(a)$-edge-coloring of the $H$-sets in parallel. Since each $H$-set has maximum degree $O(a)$, this is computed within $\tilde{O}(\sqrt{a} + \log^* n)$ using the algorithm presented in the current article in Section \ref{Sc:sd}. (Actually, this step can be computed much faster in the expense of increasing the constant of the number of colors $O(a)$. See Theorem \ref{edgcol}.)
Now we go over the sets from $H_{\ell}$ back to $H_1$. In each stage we color the edges that cross between $H_i$ and $H_{i +1} \cup ... \cup H_{\ell}$, using Lemma \ref{unionec}. (It holds that $A = H_i$, $B = H_{i+1} \cup ,..., \cup H_{\ell}$, $d = O(a)$.) This requires overall of $\Delta + O(a)$ colors. The running time of each stage is $O(a)$, and there are total of $O(\log n)$ such stages. This results in a $(\Delta + O(a))$-edge-coloring of $G$ within $O(a \log n)$ time, which proves the theorem.
\end{proof}
Next we demonstrate how to improve the running time using a yet another kind of connectors. To this end, we first need an acyclic orientation of the edges set of $G$ with bounded out-degree. As shown in \cite{BE08}, it is achieved by orienting edges that cross between different $H$-sets towards the sets with greater indices, and edges of the same $H$-set towards the endpoints with greater IDs. This results in an acyclic orientation with out-degree $d = (2 + \epsilon)a$. Now we construct the following {\em orientation-connector}. Each vertex $v \in V$ defines $k = \left \lceil \sqrt{\Delta} \right \rceil$ virtual vertices $v_1,v_2,...,v_k$. Then it groups its incoming edges into $k$ subsets of size at most $\Delta/k$ edges each. Each edge of the $i$th subset, $i \in [k]$, is connected to $v_i$ (and oriented towards $v_i$). The vertex $v$ also groups its outgoing edges into $\left \lceil \sqrt{d} \right \rceil$ subsets of size at most $\sqrt{d}$ edges each. Each edge of the $i$-th subset, $i \in \left \lceil \sqrt{d} \right \rceil$, is connected to $v_i$ (and oriented outwards of $v_i$).
See Figure 3 in Appendix A.

%%%%%%%%%%%%%%%%%%%%%%%%%%5figure
%\begin{figure}
%\includegraphics[width=\textwidth]{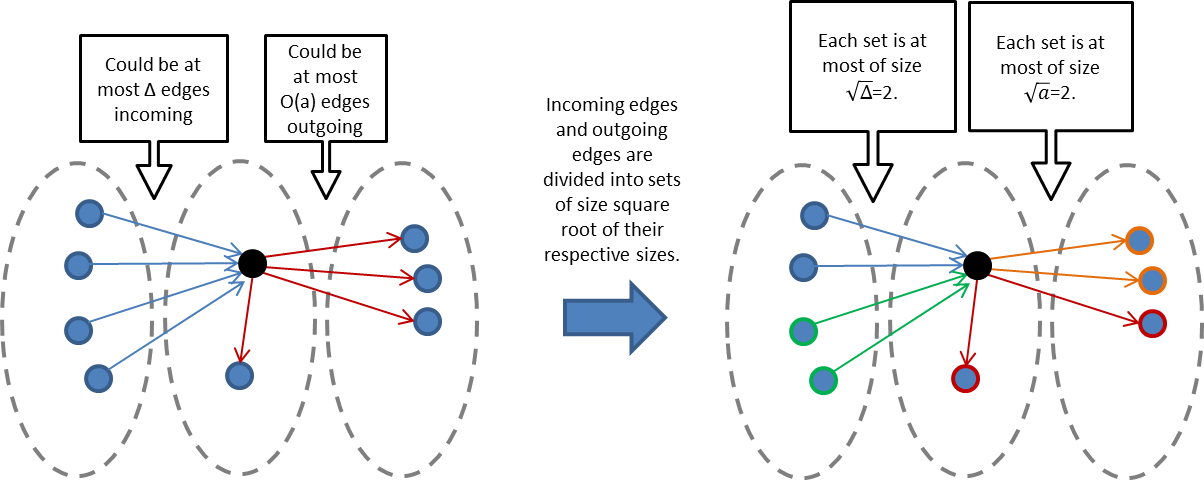}
%\caption{Orientation-connector}
%\end{figure}

The above connector has arboricity at most $\sqrt{d}$, since the out-degree is bounded by $\sqrt{d} = O(\sqrt{a})$, and the orientation is acyclic. (See, e.g., \cite{BE08}.)
Moreover, its maximum degree is at most $\sqrt{\Delta} + O(\sqrt a)$. Therefore, using Theorem \ref{arbcoloring}, we obtain a $(\sqrt{\Delta} + O(\sqrt a))$-coloring $\varphi$ of the connector within $O(\sqrt a \log n)$ time. Let $\hat{E_i} \subseteq E$ be the set of edges colored by some color $i$ of $\varphi$. Then $G(\hat{E_i})$ has at most $k = \left \lceil \sqrt{\Delta} \right \rceil$ incoming edges and $O(\sqrt{d}) = O(\sqrt{a})$ outgoing edges. (This is because there are $k$ virtual vertices per original vertex, all incoming edges of a virtual vertex are of distinct colors, all outgoing edges of a virtual vertex are of distinct colors, and only $O(\sqrt{d}) = O(\sqrt{a})$ virtual vertices have outgoing edges.) Thus, the maximum degree of $G(\hat{E_i})$ is $\left \lceil \sqrt{\Delta} \right \rceil + O(\sqrt a)$ and its arboricity is $O(\sqrt{a})$. This allows us to color all subgraphs $\{G(\hat{E_i}) \ | \ i$ is a color of $\varphi \}$, in parallel using Theorem \ref{arbcoloring}. We obtain en edge-coloring $\psi$ in each subgraph that employs $\left \lceil \sqrt{\Delta} \right \rceil + O(\sqrt a)$ colors per subgraph. The running time of this step is  $O(\sqrt a \log n)$ as well. To summarize, we obtained a proper edge coloring $\langle \varphi, \psi \rangle$ of the entire input graph. The number of colors is $(\sqrt{\Delta} + O(\sqrt a)) \cdot (\left \lceil \sqrt{\Delta} \right \rceil + O(\sqrt a)) = \Delta + O(\sqrt{\Delta \cdot a}) + O(a)$. This directly implies the following.
\begin{thm}
For graphs with $a(G) = o(\Delta)$, we compute $(\Delta + o(\Delta))$-coloring within $O(\sqrt{a} \log n)$ time.
\end{thm}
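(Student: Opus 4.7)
The plan is to invoke the orientation-connector construction together with Theorem \ref{arbcoloring} in two successive stages, essentially packaging the derivation already sketched in the paragraph preceding the statement. First I would build the orientation-connector $G'$ of $G$: starting from an $H$-partition with acyclic orientation of out-degree $d = (2+\epsilon)a$, each vertex splits its incoming edges into $k = \lceil \sqrt{\Delta}\rceil$ virtual vertices and its outgoing edges into $\lceil \sqrt{d}\rceil$ virtual vertices. As already noted, $G'$ has maximum degree $\sqrt{\Delta} + O(\sqrt{a})$ and arboricity $O(\sqrt{a})$. Applying Theorem \ref{arbcoloring} to $G'$ produces an edge-coloring $\varphi$ using $\sqrt{\Delta} + O(\sqrt{a})$ colors in $O(\sqrt{a}\log n)$ time.

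In the second stage I would partition the edge set of $G$ according to $\varphi$ into subgraphs $G(\hat E_i)$. Each such $G(\hat E_i)$ inherits maximum degree $\lceil \sqrt{\Delta}\rceil + O(\sqrt{a})$ and arboricity $O(\sqrt{a})$: at every original vertex, each color class contributes at most one incoming edge per virtual in-vertex and at most $O(\sqrt{a})$ outgoing edges in total, and the inherited orientation remains acyclic. This allows a second parallel invocation of Theorem \ref{arbcoloring}, one per subgraph, producing a coloring $\psi$ using $\lceil\sqrt{\Delta}\rceil + O(\sqrt{a})$ colors in $O(\sqrt{a}\log n)$ additional time.

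Assigning each edge $e$ the composite color $\langle \varphi(e),\psi(e)\rangle$ gives a proper edge-coloring of $G$ using
\[
(\sqrt{\Delta}+O(\sqrt{a}))\cdot(\lceil\sqrt{\Delta}\rceil+O(\sqrt{a})) \;=\; \Delta + O(\sqrt{\Delta\cdot a}) + O(a)
\]
colors. The assumption $a(G) = o(\Delta)$ forces both $\sqrt{\Delta\cdot a}$ and $a$ to be $o(\Delta)$, so the total count collapses to $\Delta + o(\Delta)$. Summing the two stage running times yields $O(\sqrt{a}\log n)$, matching the claim.

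The only delicate point is the arboricity bound on $G(\hat E_i)$: mere bounded maximum degree would be insufficient, because Theorem \ref{arbcoloring} crucially exploits both the degree and the arboricity to achieve $\Delta + O(a)$ colors rather than $O(\Delta)$ colors. The orientation-connector is engineered precisely so that the acyclic orientation descends to each color class with out-degree $O(\sqrt{a})$, which is the step that makes the recursive application of Theorem \ref{arbcoloring} legitimate and ultimately delivers the $\Delta + o(\Delta)$ bound rather than a weaker $O(\Delta)$ bound.
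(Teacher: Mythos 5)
Your proposal is correct and follows essentially the same route as the paper: it repackages the orientation-connector construction and the two successive applications of Theorem~\ref{arbcoloring} that the paper develops in the paragraph immediately preceding the theorem, and the bookkeeping of degree, arboricity, color count $\Delta + O(\sqrt{\Delta a}) + O(a)$, and running time $O(\sqrt{a}\log n)$ matches the paper's derivation. Your closing remark correctly identifies why the arboricity (out-degree) bound on each color class is the crucial point, which is exactly the role of the orientation-connector.
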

%Next, we can employ this result for yet better coloring. Specifically, right after computing the initial orientation with out-degree $O(a)$, we group outgoing edges of each vertex into $O(a^{1/3})$ subsets of size at most $a^{2/3}$ each. Then the coloring $\varphi$ of the resulting orientation-connector employs $\sqrt{\Delta} + O(a^{2/3})$ colors. The running time for computing $\varphi$ is $O(a^{1/3} \log n)$. Next, we compute $\psi$-coloring of subgraphs. Each coloring employs $\left \lceil \sqrt{\Delta} \right \rceil + O(a^{1/3})$ colors, and the running time is  $O(a^{1/6} \log n)$. The overall number of colors is $(\sqrt{\Delta} + O(a^{2/3})) \cdot (\left \lceil \sqrt{\Delta} \right \rceil + O(a^{1/3})) = \Delta + O(\sqrt{\Delta} \cdot a^{2/3}) + O(a)$. For $a = o(\Delta^{3/4})$ this is $(\Delta + o(\Delta))$-coloring.

Next, we improve our result further. Let $q \geq (2 + \epsilon)$ be a positive parameter, for some arbitrarily small constant $\epsilon > 0$. Let $\hat{a} = q \cdot a$. We can compute an $H$-partition with out degree $\hat{a}$ in $O(\frac{\log n}{\log q})$ time \cite{BE08}.  Next we describe an algorithm for computing $(\Delta^{1/x} + \hat{a}^{1/x} + 3)^x$-edge-coloring in $O(a^{1/x} \cdot (x + \frac{\log n}{\log q}))$, time for any positive integer parameter $x$. The algorithm starts by defining  $\delta = \left \lfloor \Delta^{1 - 1/x} \right \rfloor$ virtual vertices $v_1,v_2,...,v_{\delta}$ and $\hat{\alpha} =  \left \lfloor \hat{a}^{1- 1/x} \right \rfloor$  virtual vertices $v'_1,v'_2,...,v'_{\hat{\alpha}}$ per each original vertex $v$ of $G$. Then, incoming edges of $v$ are grouped into subsets of size at most $\left \lceil \Delta^{1/x} + 1 \right \rceil$ each, and connected to $v'_1,v'_2,...,v'_{\delta}$, oriented towards the virtual vertices. In addition, outgoing edges of $v$ are grouped into subsets of size at most $\left \lceil \hat{a}^{1/x} + 1 \right \rceil$ each, and connected to $v'_1,v'_2,...,v'_{\hat{\alpha}}$, oriented outwards from the virtual vertices. Note that the resulting orientation-connector is a bipartite graph. Indeed, for any $v,u \in V(G)$, the vertices $v_i$ and $u'_j$ are not connected by an edge for any pair of indices $i \neq j$. Thus, all virtual vertices of the form $v_i$ are on one side in the bipartite graph, while all virtual vertices of the form $v'_i$ are on the other, for all indices $i$.
Note that all vertices on one side have degree at most $\left \lceil \Delta^{1/x} + 1 \right \rceil$, and all vertices on the other side have degree at most $\left \lceil \hat{a}^{1/x} + 1 \right \rceil$. Therefore, this orientation-connector can be colored with at most $\Delta^{1/x} + \hat{a}^{1/x} + 3$ colors within $O(\hat{a}^{1/x})$ time (See Theorem \ref{unionec}.) Denote this coloring by $\varphi$.

The edge-coloring $\varphi$ constitutes a partition of the original graph $G$ into $\Delta^{1/x} + \hat{a}^{1/x} + 3$ color classes. The subgraph of $G$ induced by each color class has maximum degree at most $\delta = \Delta^{1 - 1/x}$, and maximum out-degree at most $\hat{\alpha} = \hat{a}^{1- 1/x}$. By repeating the same procedure again in each subgraph in parallel, we obtain  $\Delta^{1/x} + \hat{a}^{1/x} + 3$ colors within each subgraph in time $O(\hat{a}^{1/x})$. The overall number of colors is now $(\Delta^{1/x} + \hat{a}^{1/x} + 3)^2$. The subgraph of $G$ induced by each color class has maximum degree at most $\Delta^{1 - 2/x}$, and maximum out-degree at most $\hat{a}^{1- 2/x}$. Overall, if we perform $x - 1$ such stages, we obtain a $(\Delta^{1/x} + \hat{a}^{1/x} + 3)^{x-1}$-coloring, such that each color class induces a subgraph in $G$ with  maximum degree at most $\Delta^{1/x}$, and maximum out-degree at most $\hat{a}^{1/x}$. All these subgraphs can be colored in parallel using $\Delta^{1/x} + \hat{a}^{1/x} + 3$ unique colors within $O(\hat{a}^{1/x} \log n)$ time, by Theorem \ref{arbcoloring}. This results in a proper edge coloring of the entire input graph. To summarize, we obtained the following result.

\begin{thm} \label{colorpower}
For any $q \geq (2 + \epsilon)$, $\hat{a} = q \cdot a$, and $ x = 1,2,...$, we obtain $(\Delta^{1/x} + \hat{a}^{1/x} + 3)^x$-coloring within $O(\hat{a}^{1/x} (x + \frac{\log n}{\log q}))$ time.
\end{thm}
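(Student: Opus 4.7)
The plan is to follow the construction outlined in the paragraphs immediately preceding Theorem \ref{colorpower}, and to verify that each stage of the recursive scheme preserves the required degree/arboricity invariants. First, I would invoke the $H$-partition algorithm of \cite{BE08} with the parameter $q$, producing in $O(\log n / \log q)$ time an acyclic orientation of $E(G)$ whose maximum out-degree is $\hat{a} = q \cdot a$. All subsequent constructions operate on this oriented graph.

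Next, I would build the orientation-connector: each vertex $v$ splits its incoming edges into groups of size at most $\lceil \Delta^{1/x} + 1 \rceil$, attached to virtual vertices $v_1, \dots, v_\delta$ with $\delta = \lfloor \Delta^{1-1/x} \rfloor$, and likewise splits its outgoing edges into groups of size at most $\lceil \hat{a}^{1/x} + 1 \rceil$, attached to virtual vertices $v'_1, \dots, v'_{\hat{\alpha}}$ with $\hat{\alpha} = \lfloor \hat{a}^{1-1/x} \rfloor$. The key observation is that every edge has its tail at some $v'_j$ and its head at some $u_i$, so the connector is bipartite with one side of maximum degree $\lceil \Delta^{1/x} + 1 \rceil$ and the other of maximum degree $\lceil \hat{a}^{1/x} + 1 \rceil$. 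Applying Lemma \ref{unionec} to this bipartite connector (taking $A$ to be the ``out'' side and $d = \lceil \hat{a}^{1/x} + 1 \rceil$, noting that the inner $G(A),G(B)$ subgraphs are edgeless and hence vacuously pre-colored) yields a proper edge-coloring $\varphi$ with at most $\Delta^{1/x} + \hat{a}^{1/x} + 3$ colors in time $O(\hat{a}^{1/x})$.

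I would then iterate: each color class of $\varphi$ induces a subgraph of $G$ in which every vertex receives at most one edge per virtual vertex, so the maximum degree drops to $\Delta^{1-1/x}$ and the maximum out-degree drops to $\hat{a}^{1-1/x}$ (using the acyclic orientation inherited from the $H$-partition). Running the same connector-and-color procedure in parallel on each color class, with parameters rescaled to the new degree bounds, produces a $(\Delta^{1/x} + \hat{a}^{1/x} + 3)^2$-coloring where subgraphs have degree at most $\Delta^{1-2/x}$ and out-degree at most $\hat{a}^{1-2/x}$. After $x-1$ such rounds the subgraphs have maximum degree $\Delta^{1/x}$ and arboricity $O(\hat{a}^{1/x})$, and each iteration costs $O(\hat{a}^{1/x})$ time.

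Finally, I would color all remaining subgraphs in parallel using Theorem \ref{arbcoloring}, which uses $\Delta^{1/x} + O(\hat{a}^{1/x})$ colors per subgraph in $O(\hat{a}^{1/x} \log n)$ time; absorbing constants, this fits into $\Delta^{1/x} + \hat{a}^{1/x} + 3$ colors. Multiplying palettes across the $x$ stages gives $(\Delta^{1/x} + \hat{a}^{1/x} + 3)^x$ total colors. Summing the runtimes, the $H$-partition contributes $O(\log n / \log q)$, the $x-1$ connector stages contribute $O(x \cdot \hat{a}^{1/x})$, and the final Theorem \ref{arbcoloring} step contributes $O(\hat{a}^{1/x} \log n)$; bundled together this is $O(\hat{a}^{1/x}(x + \log n/\log q))$. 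The main obstacle I anticipate is bookkeeping: verifying that after each recursive step the color-class subgraphs inherit both an acyclic orientation and the claimed degree/out-degree bounds, so that the next invocation of the connector and of Lemma \ref{unionec} is legitimate; once that invariant is pinned down the rest of the argument is essentially a geometric-like telescoping of the per-stage guarantees.
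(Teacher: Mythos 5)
Your proposal follows the paper's construction and the overall structure is right, but the closing running-time arithmetic has a genuine gap. You itemize the final step's cost as $O(\hat{a}^{1/x}\log n)$ (by invoking Theorem \ref{arbcoloring} as a black box) and then assert that the three contributions bundle into $O(\hat{a}^{1/x}(x + \log n/\log q))$. That inclusion is false whenever $q$ is superconstant: for instance if $\log q = \Theta(\log n)$ then $\hat{a}^{1/x}\log n/\log q = O(\hat{a}^{1/x})$, while $\hat{a}^{1/x}\log n$ is unbounded in $n$. This matters precisely in the regime the theorem is applied later (where $q$ is chosen large to make $\log n/\log q$ small), so the claimed bound is not established by your argument as written.

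The fix is to observe that the acyclic orientation / $H$-partition computed at the outset with parameter $q$ has only $\ell = O(\log n/\log q)$ levels, and that this partition restricts to every color-class subgraph produced by the $x-1$ connector stages. In the final step, rather than recomputing an $H$-partition from scratch (which is where the $O(\log n)$ factor in Theorem \ref{arbcoloring} comes from), you should run the body of Theorem \ref{arbcoloring}'s procedure directly on the inherited $\ell$-level partition: first edge-color the $H$-sets internally, then sweep backwards through the $\ell$ levels applying Lemma \ref{unionec}, each sweep costing $O(\hat{a}^{1/x})$. This yields $O(\hat{a}^{1/x}\cdot\log n/\log q)$ for the final step, and now the three contributions do sum to $O(\hat{a}^{1/x}(x + \log n/\log q))$. (To be fair, the paper's own proof sketch has the same imprecision, citing $O(\hat{a}^{1/x}\log n)$ for the final step; the reuse of the initial $H$-partition is the implicit justification for the stated bound, and it should be made explicit.)
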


Next, we analyze how large the arboricity of a graph can be to still allow for $\Delta + o(\Delta)$ colors in polylogarithmic time. 
%Theorem \ref{colorpower}.
%%%It holds that $$ (\Delta^{1/x} + \hat{a}^{1/x})^x = \sum_{k = 0}^x {x \choose k} \Delta^{\frac{x - k}{x}} \hat{a}^{\frac{k}{x}} \leq \Delta + x \Delta^{\frac{x-1}{x}}\hat{a}^{\frac{1}{x}} + x^2 \Delta^{\frac{x-2}{x}}\hat{a}^{\frac{2}{x}} + x^3 \Delta^{\frac{x-3}{x}}\hat{a}^{\frac{3}{x}} + ... + \hat{a}$$
%Let us bound $x$ by $O(\frac{\log \Delta}{\log \log \Delta})$.
%%%In order to have the desired number of colors, we need to satisfy $x^k \hat{a}^{\frac{k}{x}} = o(\frac{\Delta^{k/x}}{x})$. Therefore, we need $\hat{a} = o(\frac{\Delta}{x^{x (k + 1)/k}})$. Since $k$ is at most $x$, we need $\hat{a} = o(\frac{\Delta}{x^{x+1}})$. Let us bound $x$ by $\epsilon \cdot \frac{\log \Delta}{\log \log \Delta}$, for an arbitrarilly small constant $\epsilon > 0$. Then, as long as $\hat{a} = o(\frac{\Delta}{x^{x + 1}}) = o(\Delta^{1 - \epsilon})$ the number of colors is $\Delta + o(\Delta)$. (We note that actually, $\hat{a}$ should be bounded by $\mbox{min} (\epsilon \cdot \frac{\log \Delta}{\log \log \Delta}, \log \hat{a})$, since selecting larger number of levels $x$ does not improve running time.) To summarize this discussion, by setting $\hat{a} = (2 + \epsilon)a$ and $x = \epsilon \cdot \frac{\log a}{\log \log a}$ we obtain the next theorem.
%%%\begin{thm}
%%%The family of graphs with arboricity $o(\Delta^{1 - \epsilon})$ can be edge-colored with $\Delta + o(\Delta)$ colors within $O(\log a \log n)$ time.  
%%%\end{thm}
We set $x = \frac{\log \hat{a}}{c \log \log \hat{a}}$, for a (possibly large) constant $c > 0$. Hence $\hat{a}^{1/x} = \log^c \hat{a}$. Let $\eta > 0$ be a parameter. Suppose that we have
\begin{equation}\label{eq1}
\Delta^{1/x} \geq \frac{x}{\eta}(\hat{a}^{1/x} + 3).
\end{equation}
Then the number of colors is $(\Delta^{1/x} + \hat{a}^{1/x} + 3)^x = \Delta(1 + \frac{\hat{a}^{1/x} + 3}{\Delta^{1/x}})^x \leq \Delta(1 + \eta/x)^x \leq \Delta \cdot e^{\eta} \leq \Delta (1 + 2\eta)$, for a sufficiently small $\eta > 0$. The condition (\ref{eq1}) means that
\begin{equation}\label{eq2}
\Delta \geq \frac{x^x}{\eta^x}(\hat{a}^{1/x} + 3)^x
\end{equation}
Set $q = 2 + \epsilon'$, for some constant $\epsilon' > 0$. (Recall that $\hat{a} = q \cdot a$.) Then $\Delta \geq a^{1 + 1/c + \frac{\log (2/ \eta)} {c \log \log a}}$ implies (\ref{eq2}) and (\ref{eq1}). If $a > \Delta^{\frac{1}{4\log \log \Delta}}$, then we set $\eta = O(1/\log \Delta)$, and obtain $\Delta(1 + O(\frac{1}{\log \Delta}))$-edge-coloring, assuming $\Delta \geq a^{1 + O(1/c)}$. The running time is $O(x \cdot a^{1/x} + a^{1/x} \log n) = O(\log^c a \log n)$.

If $a < \Delta^{\frac{1}{4\log \log \Delta}}$, then we set $x = \log \hat{a} = \log (a q)$. Here we do not set $q = 2 + \epsilon'$, but rather use a larger value of this parameter. Specifically, we set $q = \frac{1}{a} \cdot 2^{\frac{\log \Delta}{\log \log \Delta + \log \frac{1}{\eta}+ 1}}$, and $\eta = \frac{1}{\log \Delta}$. The running time becomes $O(\log \hat{a} + \frac{\log n}{\log q}) = O(\log a + \log q + \frac{\log n}{\log q}) = O(\frac{\log \Delta}{\log \log \Delta} + \frac{\log n}{\log \Delta/ (2 \log \log \Delta) - \log a})$.
This expression is always $O(\log n)$, and it is $o(\log n)$ whenever $\Delta = \omega(1)$. The number of colors is again $\Delta(1 + O(\frac{1}{\log \Delta}))$.
\begin{col}
Whenever $a < \Delta^{\frac{1}{4\log \log \Delta}}$, a variant of our algorithm computes a $\Delta(1 + O(\frac{1}{\log \Delta})) = \Delta(1 + o(1))$-edge-coloring in $O(\log n)$ time. Moreover, the time is $o(\log n)$ whenever $\Delta = \omega(1)$. \\
For larger arboricity, for any arbitrarily large positive parameter $c > 0$, a variant of our algorithm computes $\Delta(1 + O(\frac{1}{\log \Delta}))$-edge-coloring in time $O(\log^c a \log n)$, assuming $\Delta \geq a^{1 + O(1/c)}$.
\end{col}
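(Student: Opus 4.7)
The plan is to invoke Theorem~\ref{colorpower} with parameters $x$ and $q$ chosen so that condition (\ref{eq1}) is satisfied with $\eta = \Theta(1/\log\Delta)$, which by the calculation $(1 + \eta/x)^x \le e^\eta \le 1 + 2\eta$ already forces the color count to be $\Delta(1 + O(1/\log\Delta))$. Since Theorem~\ref{colorpower} gives $(\Delta^{1/x} + \hat{a}^{1/x} + 3)^x$ colors in $O(\hat{a}^{1/x}(x + \frac{\log n}{\log q}))$ time, the whole exercise reduces to splitting on the size of $a$ relative to $\Delta$ and picking $x, q$ so that (i) (\ref{eq1}) holds, (ii) $\hat{a}^{1/x}$ becomes $\mathrm{polylog}(a)$, and (iii) $\frac{\log n}{\log q}$ is $O(\log n)$ (or $o(\log n)$).

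For the first (larger-arboricity) regime, I would fix an arbitrarily large constant $c>0$, set $q = 2 + \epsilon'$ (so $\hat{a} = \Theta(a)$), and take $x = \frac{\log \hat{a}}{c \log \log \hat{a}}$. Then $\hat{a}^{1/x} = \log^{c}\hat{a}$ and the hypothesis $\Delta \ge a^{1 + O(1/c)}$ is exactly what is needed to verify inequality (\ref{eq2}) with $\eta = 1/\log\Delta$: unpacking $x^x / \eta^x = a^{1/c + O(\log\log a /\log a)\cdot \log(1/\eta)}$, so the constant absorbed in the $O(1/c)$ exponent of $a$ can be chosen to dominate the $\eta$-dependent factor. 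Condition (\ref{eq1}) then yields $\Delta(1 + O(1/\log\Delta))$ colors, and the running time is $O(x\cdot \hat{a}^{1/x} + \hat{a}^{1/x}\log n) = O(\log^c a \cdot \log n)$, as claimed.

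For the second (small-arboricity) regime, $a < \Delta^{1/(4\log\log\Delta)}$, I would instead take $x = \log\hat{a}$ (so $\hat{a}^{1/x}$ is just a constant) and increase $q$ well beyond $2+\epsilon'$ to compensate, specifically $q = \frac{1}{a}\cdot 2^{\log\Delta/(\log\log\Delta + \log(1/\eta) + 1)}$ with $\eta = 1/\log\Delta$. The size of $q$ is chosen precisely so that $\Delta^{1/x} \ge (x/\eta)(\hat{a}^{1/x}+3)$, i.e.\ (\ref{eq1}) holds; the arithmetic here is just substituting the value of $q$ into $\hat{a} = qa$ and checking that $\Delta^{1/x}$ beats the right-hand side because of the assumption $a < \Delta^{1/(4\log\log\Delta)}$. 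This again gives $\Delta(1+O(1/\log\Delta))$ colors. The running time from Theorem~\ref{colorpower} is $O(\hat{a}^{1/x}(x + \frac{\log n}{\log q})) = O(\log\hat{a} + \frac{\log n}{\log q})$, and plugging in the chosen $q$ yields $O(\frac{\log\Delta}{\log\log\Delta} + \frac{\log n}{\log\Delta/(2\log\log\Delta) - \log a})$. One then observes that the second term is $O(\log n)$ in general and drops to $o(\log n)$ as soon as $\Delta = \omega(1)$, because the denominator then tends to infinity.

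The main technical obstacle is the delicate parameter balancing: we need $x$ large enough to make $\hat{a}^{1/x}$ small (so that (\ref{eq1}) can be enforced with a tiny $\eta$), but small enough that $(1+\eta/x)^x \le 1+2\eta$ does not blow up, and simultaneously we must choose $q$ so that $\log q$ is large enough to cancel a factor of $\log n$ in the runtime while keeping $\hat{a} = qa$ small enough that (\ref{eq1}) remains valid. The split into the two regimes above is exactly what makes both constraints simultaneously satisfiable; the rest is verifying the two resulting inequalities and reading off the bounds from Theorem~\ref{colorpower}.
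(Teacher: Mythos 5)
Your proposal is essentially identical to the paper's own argument: you choose the same $x = \frac{\log \hat{a}}{c\log\log\hat{a}}$ and $q = 2+\epsilon'$ in the large-arboricity regime, and the same $x = \log\hat{a}$, $q = \frac{1}{a}\cdot 2^{\log\Delta/(\log\log\Delta + \log(1/\eta)+1)}$, $\eta = 1/\log\Delta$ in the small-arboricity regime, then invoke Theorem~\ref{colorpower} and read off colors and runtime exactly as the paper does. The only minor blemish is the intermediate bookkeeping in the large-arboricity case, where you write $x^x/\eta^x = a^{1/c + O(\log\log a/\log a)\cdot\log(1/\eta)}$; the correct exponent is $\frac{1}{c} + \frac{\log(1/\eta)}{c\log\log a}$, and to see that this is $O(1/c)$ one must use $a \ge \Delta^{1/(4\log\log\Delta)}$ (so $\log\log a = \Omega(\log\log\Delta)$), a dependency you leave implicit in "the constant absorbed in the $O(1/c)$ exponent can be chosen to dominate"; the conclusion is still the same as the paper's.
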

In other words, the larger the gap between the maximum degree $\Delta$ and the arboricity $a$, the faster is our algorithm. Moreover, the algorithm runs in polylogarithmic time whenever the gap is at least polynomial, i.e., $\Delta \geq a^{1 + \epsilon}$, for some constant $\epsilon > 0$. \\
%%%%%%%%% add a result anout $(a^{epsilon} log n)$ time.

\noindent{\bf Acknowledgements\\}
\noindent The authors are grateful to Alessandro Panconesi for fruitful discussions.

%%  ================================================================== END

\section{Bibliography}
\label{Sc:bib}

\clearpage
\pagenumbering{roman}
\appendix
\centerline{\LARGE\bf Appendix}
\section{Figures}

%\begin{figure}
\includegraphics[width=\textwidth]{Picture3.png}
%\caption{A connector with $t = 4$ of a pair of cliques $Q,R$ that share a vertex $v$.}
%\end{figure}
\\ \\
\centerline{ Figure 1.  A connector with $t = 4$ of a pair of cliques $Q,R$ that share a vertex $v$. }
\clearpage

\includegraphics[width=\textwidth]{Picture2.png}
\\ \\
\centerline{ Figure 2. Edge-connector with $t = 3$.} \\ \\ \\ \\ \\ \\ \\
\includegraphics[width=\textwidth]{P1.png} \\ \\
\centerline{ Figure 3. Orientation connector.}
\clearpage

\section{Proofs}
\noindent \textbf{Proof of Lemma \ref{lem:divers}} \APPt 

\noindent \textbf{Proof of Lemma \ref{colCountb}} \APPa 

\end{document}